\newif\ifFull
\newenvironment{proof}{\begin{normalsize}\noindent{\bf Proof:}}%
{ $\Box$ \end{normalsize} \\}
\newcommand{\qed}{}
\def\@begintheorem#1#2{\sl \trivlist \item[\hskip \labelsep{\bf #1\ #2:}]}
\def\@opargbegintheorem#1#2#3{\sl \trivlist
      \item[\hskip \labelsep{\bf #1\ #2\ #3:}]}
\newcommand{\R}{{\bf R}}
\newtheorem{theorem}{Theorem}
\newtheorem{lemma}[theorem]{Lemma}
\newtheorem{corollary}[theorem]{Corollary}
\newcommand{\UH}{\mbox{\it UH}}
\begin{document}

\title{Privacy-Preserving Data-Oblivious Geometric \\
     Algorithms for Geographic Data}

\date{}

\ifFull
\author{
David Eppstein\\[3pt]
Dept.~of Computer Science \\
Univ.~of California, Irvine \\
eppstein(at)ics.uci.edu
\and
Michael T.~Goodrich \\[3pt]
Dept.~of Computer Science \\
Univ.~of California, Irvine \\
goodrich(at)ics.uci.edu
\and
Roberto Tamassia\\[3pt]
Dept.~of Computer Science \\
Brown University \\
rt(at)cs.brown.edu
}
\else
\numberofauthors{3}
\alignauthor{
David Eppstein\\
\affaddr{Dept.~of Computer Science} \\
\affaddrUniv.~of California, Irvine} \\
\email{eppstein(at)ics.uci.edu}
\and
Michael T.~Goodrich \\[3pt]
Dept.~of Computer Science \\
Univ.~of California, Irvine \\
goodrich(at)ics.uci.edu
\and
Roberto Tamassia\\[3pt]
Dept.~of Computer Science \\
Brown University \\
rt(at)cs.brown.edu
}
\fi

\maketitle 

% ------------------------------------------------------------------------

\begin{abstract}
We give efficient data-oblivious algorithms for several fundamental 
geometric problems that are relevant to geographic information systems,
including planar convex hulls and
all-nearest neighbors.  
Our methods are ``data-oblivious''
in that they don't perform any data-dependent operations,
with the exception of operations performed inside low-level blackbox
circuits having a constant number of inputs and outputs.  Thus, an
adversary who observes the control flow of one of our algorithms, but who cannot
see the inputs and outputs to the blackbox circuits, cannot learn
anything about the input or output.  
This behavior makes our methods
applicable to \emph{secure multiparty computation} (SMC) protocols
for geographic data used in location-based services. 
In SMC protocols, multiple parties wish to perform a
computation on their combined data without revealing
individual data to the
other parties.  
For instance, our methods can be used to solve a problem posed
by Du and Atallah, where Alice has a set, $A$, of $m$ private points in
the plane, Bob has another set, $B$, of $n$ private points in the
plane, and Alice and Bob want to jointly compute the convex hull of
$A\cup B$ without disclosing any more information than what can
be derived from the answer. 
In particular, neither Alice nor Bob want
to reveal any of their respective points that are in
the interior of the convex hull of $A\cup B$.

\medskip\noindent
\textbf{Keywords:}
data-oblivious algorithms,
convex hulls, compressed quadtrees,
closest pairs, all nearest neighbors,
well-separated pairs decomposition,
secure multi-party computations.
\end{abstract}

% ------------------------------------------------------------------------

\section{Introduction}
\ifFull\else
\pagestyle{plain}
\def\thepage{\arabic{page}}
\fi

As handheld devices containing GPS receivers have become more popular,
so have location-based services using them.
In particular, the emergence of location-based
mobile social networking services, such as GyPSii,
Pelago, Loopt and Google Latitude, is revolutionizing social networking.
In these applications, the location of a handheld device is 
a critical component of a social-networking
computation, and sometimes is even 
the \emph{sole} attribute of interest with respect to input from the user,
such as for real-time traffic or real-time friend location.
(See Figure~\ref{fig:iphone}.)

\begin{figure}[hbt]
\centering\includegraphics[height=3.8in]{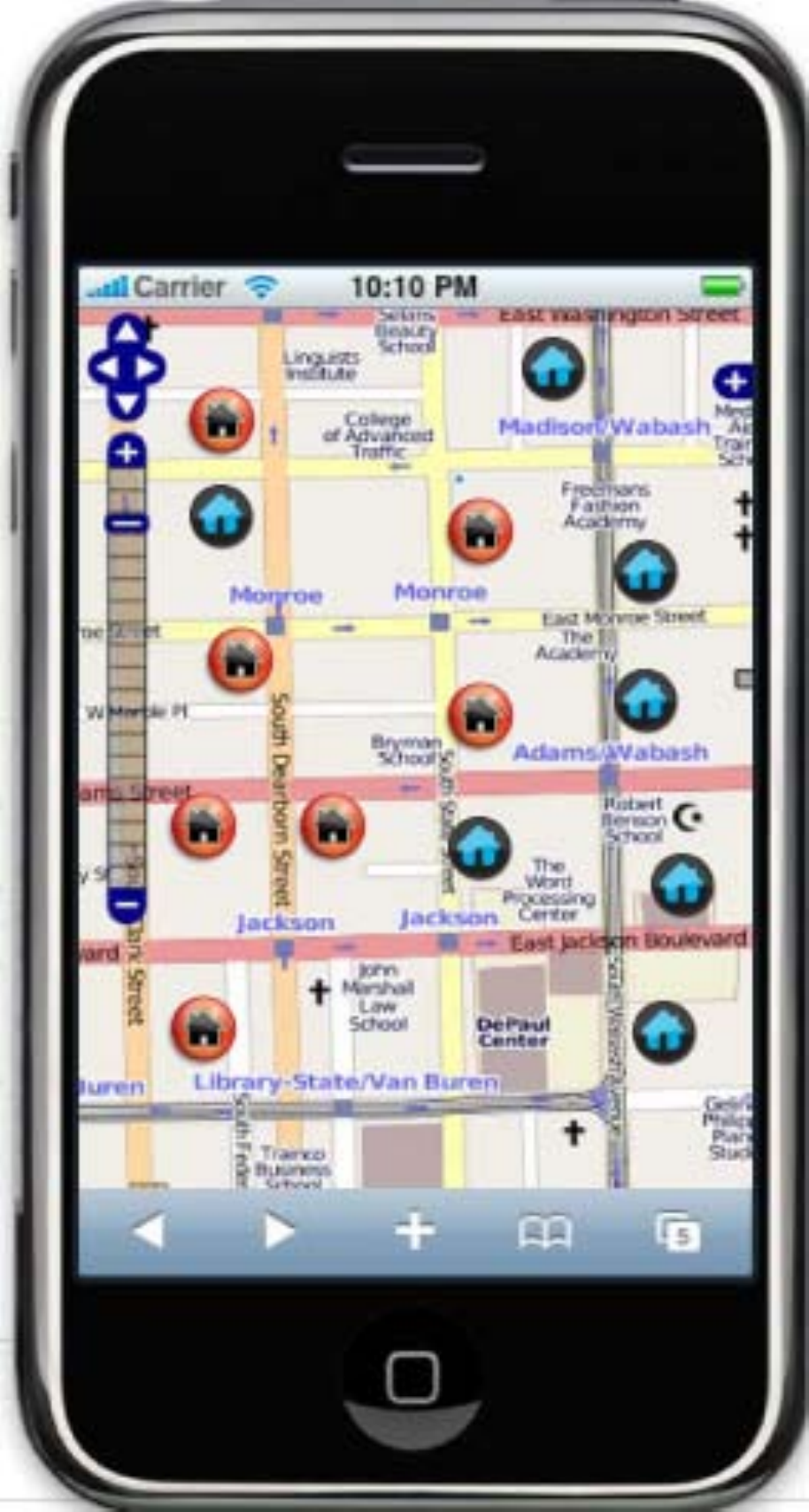}
\caption{\label{fig:iphone} Mock-up of a GPS-based cellphone app
for identifying the locations of people from two different organizations.
(Map image is from openstreetmap.org; public-domain cellphone image
is by Tibounise.)}
\end{figure}

Nevertheless, an individual's physical location is often considered
private information and revealing it to networked applications poses
serious privacy and security risks.
For example, an employee might want to 
conceal from her employer that she is interviewing with a rival
and a husband might want to conceal from his wife where 
he is shopping for her birthday present, not to mention the privacy
concerns associated with trips to a hospital, police station, or
court. 
Even just revealing that one is not at home could be a risk if
that information is discovered by thieves.
Thus, although participating in social location-based services
can have significant benefits,
many users will likely be reluctant to
participate without solid privacy protections.

\subsection{Secure Multi-party Computations}
One way of formalizing privacy requirements for geographic data
is through \emph{secure multi-party computation} (SMC) protocols 
(e.g., see~\cite{bnp-fssmp-08,clos-uctms-02,da-smpcp-01,dz-passm-02,%
m-smpcm-06,mnps-fstpc-04}), 
in which two or more parties 
hold different subsets of a collection of data 
values, $\{x_1,x_2,\ldots,x_n\}$,
and are interested in computing some function,
$f(x_1,x_2,\ldots,x_n)$, on these values.
Due to privacy concerns, none of the 
parties is willing to reveal specific values of his or her
pieces of data.
SMC protocols allow the parties to compute the value of $f$ on their
collective input values without revealing any of their specific data
values (other than what can inferred
from the output value of function~$f$).
One of the main tools for building SMC protocols is to encode the 
function $f$ as a circuit and then simulate an evaluation of this
circuit using 
cryptographically-masked values~\cite{bnp-fssmp-08,mnps-fstpc-04}. 
By unmasking only the output value(s), the parties can
learn the value of $f$ without revealing their own data.
Unfortunately, from a practical standpoint, 
encoding entire computations as circuits can
involve significant blow-ups in space and time.

\subsection{Data-Oblivious Algorithms}
The time and space overhead incurred by SMC protocols
can be managed more efficiently, however, 
by using \emph{data-oblivious algorithms} to drive SMC 
computations~\cite{wlgdz-bpstp-10}.
A data-oblivious computation consists of a sequence of 
data accesses that do not depend on the input values.
All functions that combine data values 
are encapsulated into \emph{black box} operations,
with a constant number of inputs and outputs.
The control flow 
depends only on the input size, the problem being solved, and, in the case of
randomized algorithms, the values of random variables.
A classical example of an oblivious algorithm is a \emph{sorting network}, 
an algorithm that sorts its data values by routing them through black box 
\emph{comparators} that take as input pairs of values and produce as 
output the minimum and maximum of the pair.
However, unlike sorting networks, which are usually described as circuits 
of comparators, we allow data-oblivious algorithms to be structured 
as conventional sequential algorithms using random-access memory, 
looping, and recursion. An algorithm is data-oblivious, in our model, 
if two inputs of the same size have the same distribution of possible
memory accesses.

An adversary
who can see all the control flow of a 
data-oblivious algorithm and all the memory addresses it accesses,
but who cannot see any actual
memory values or the results of any of its black box computations, cannot learn
anything about the specific inputs.
Therefore, in a SMC simulation of a data-oblivious algorithm,
the masked evaluation of each black box may be performed independently,
avoiding the blow-ups that arise from cryptographic simulation 
of non-constant-sized circuits.
The parts of the algorithm outside of the black boxes may be 
performed directly and openly rather than being simulated,
with masked data values taking the place of their unmasked 
values in the algorithm's memory.
The resulting SMC algorithms will be considerably more efficient 
than an SMC simulation of a non-oblivious algorithm.
Our aim, therefore, is to design oblivious
algorithms for geometric problems to be used via 
SMC simulations as components of privacy-preserving location-based services.

\subsection{Problems of Interest}
In this paper, we study several classic geometric problems 
for geographic data, including the following,
for which efficient conventional algorithms can be found 
in computational geometry textbooks
(e.g., see~\cite{bkos-cgaa-97,e-acg-87,o-cgc-98,ps-cgi-90,s-sdsqo-89}):
\begin{itemize}
\item
\emph{convex hull:} given $n$ labeled points in the plane, return the 
labels of those on the boundary of the smallest convex set containing the set
of points.
\item
\emph{quadtree:}
given $n$ points in the plane, construct a representation of the compressed
point-set quadtree~\cite{s-sdsqo-89,s-dasds-90} for this set of points.
\item
\emph{closest pair:}
given $n$ points in the plane, return the pair that are the closest.
\item
\emph{all nearest neighbors:}
given $n$ labeled points in the plane, return, for each point, the label of
its nearest neighbor point.
\end{itemize}
In addition, we study a
more specialized problem---the construction of
a well-separated pair decomposition---which was introduced 
by Callahan and Kosaraju~\cite{ck-dmpsa-95}, 
who also showed how it can be used to solve the all
nearest neighbor problem in a way that generalizes an approach of
Vaidya~\cite{v-oaann-89}.
Chan~\cite{c-wspdl-08} gives a linear-time 
algorithm for computing a well-separated pair decomposition
in the case of integer-coordinates.

\subsection{Related Prior Work}
The general topic of privacy for location-based services
is of considerable interest in GIS (e.g., 
see~\cite{b-pptlbs-09,btgd-spgd-08,cm-pilbs-09,f-plas-09,k-prtlbs-09,l-plalbs-09}).
Of all the problems listed above, the convex hull and 
nearest-neighbor problems are
probably the most well-motivated for geographic data.
For instance, Stojmenovic {\it et al.}~\cite{srl-vdchb-06}
use convex hulls of nearest neighbors
for greedy routing in wireless networks.
Getz and Wilmers~\cite{gw-alnnc-04}
use unions of convex hulls of nearest neighbors to construct species
home ranges from GPS data.
Basch, Guibas, and Hershberger~\cite{bgh-dsmd-97} give data
structures for maintaining convex hulls and closest pairs for mobile
geographic data.
Likewise, at previous ACM~GIS conferences,
Henrich {\it et al.}~\cite{hlb-adgfa-08}
use convex hulls to define geographic footprints for geographic
database queries,
Liu and Lee~\cite{ll-awlun-09} use convex hulls to study 
wireless location using non-line-of-sight radio signals,
and
Buchin {\it et al.}~\cite{bbvl-flspt-09} use convex hulls to
characterize similar parts of trajectories.
In addition, there is considerable prior work
on answering nearest-neighbor queries for both static and mobile GPS data
(e.g., see~\cite{bjks-nrnnq-06,hitkk-dhk-08,pvc-mcnnq-08,rkv-nnq-95}).
None of these prior algorithms is data-oblivious.

In addition to the work on SMC protocols cited above,
Du and Atallah~\cite{da-smpcp-01} survey SMC protocols
and mention several geometric problems, including planar convex
hull, as being of interest for~SMC.
Atallah and Du~\cite{ad-smpcg-01} specifically address privacy-preserving 
computational geometry SMC protocols, including two-party
protocols for point-in-polygon detection, polygon intersection
detection, and closest-pair finding,
although there has been some questions regarding
the correctness of some of these methods\footnote{Du, private communication.}.
Li and Dai~\cite{ld-stpcg-05} study several low-level primitives for
privacy-preserving geometric computations and give a protocol
of complexity $O(n^2)$ for computing the closest red-blue pair
between a set of red points and blue points in the plane.
Wang {\it et al.}~\cite{wlh-pppfc-08} and, independently,
Wang and Zhang~\cite{wz-achap-09}, present SMC protocols for 
two-party convex hull construction, with quadratic communication complexity.
Hans {\it et al.}~\cite{hags-oppch-09} give an improved SMC protocol
for convex hull construction, using a protocol with complexity
$O(n\log n)$, but their method is non-oblivious and 
reveals all the points on the convex hull, 
whereas our method can be used to selectively reveal 
only certain types of points of interest.
In addition, Li {\it et al.}~\cite{lhyz-ptdpp-08} present a
quadratic SMC protocol for approximate three-dimensional convex
hulls.

Goldreich and Ostrovsky~\cite{go-spsor-96} give a general
construction for converting a non-oblivious algorithm into an oblivious one.
Their simulation has an $O(\log^3 n)$ blow-up in
time, which results in inefficient oblivious algorithms
if applied to existing computational geometry algorithms for the
problems we address.

\subsection{Our Results}
We give data-oblivious algorithms for planar convex hull construction, 
well-separated pair decomposition,
compressed quadtree construction,
closest pairs, and all nearest neighbor finding in a set of $n$ points.
Our methods run in $O(n\log n)$ time and, using
known SMC protocols 
(e.g., see~\cite{bnp-fssmp-08,clos-uctms-02,da-smpcp-01,dz-passm-02,%
m-smpcm-06,mnps-fstpc-04}), 
result in privacy-preserving two-party protocols
for performing joint
computations of these geometric algorithms on private data held separately by
two parties,
with communication complexities that
are $O(n\log n)$ times the complexities for the low-level SMC protocols
(used to simulate our low-level blackbox computations.
In addition, we also give oblivious algorithms for list ranking, tree contraction, 
and all nearest larger values, with similar $O(n\log n)$ running times.

Our optimal oblivious convex hull algorithm 
involves the use of a new geometric classification for common tangent finding
for two convex polygons separated by a line, which extends the Overmars and
van~Leeuwen classification~\cite{ol-mcp-81} 
to subsequences of edges of the respective polygons.
Our optimal oblivious algorithms for all nearest neighbors, closest pairs,
compressed quadtree construction,
and well-separated pair decompositions, on the other hand, depend more on
new combinatorial insights than geometric ones, in that our methods are based on
new oblivious computations for list ranking, tree contraction, and all nearest
larger values.

% ------------------------------------------------------------------------

\section{Oblivious Convex Hulls}
Suppose we are given an array $A$ of $n$ points in the plane, 
sorted by their $x$-coordinates (since it
is possible to sort $A$ obliviously in $O(n\log n)$ 
time~\cite{aks-scps-83,g-rsaso-10}).
The desired output is for each point $p$ in $A$ 
to be labeled with a pair of points, $(q,r)$, that form the upper convex hull
edge that is intersected by a vertical line through $p$.
If $p$ is on the upper convex hull, then $(q,r)$ is the convex hull edge that
follows $p$ in the clockwise direction.
This assumption about the output format could also be replaced,
without changing the overall running time, by a compact
listing of the upper hull vertices padded with ``blank'' points so that the
total size is $n$ (since we must maintain the data-oblivious nature 
of our method).
This alternative output format could be produced, for instance, by performing
a compaction operation on the uncompressed output format of labeling each
point with its vertical upper convex hull edge. 
For instance, in a privacy-preserving 
security two-party protocol, Alice could hold 
a set, $A$, of $n$ blue points and Bob
could hold a set, $B$, of $n$ red points, and we could use the 
oblivious convex hull
algorithm we describe in this section to let Alice and Bob each learn
which of their respective points are on the convex hull of $A\cup B$.
(See Figure~\ref{fig:ch}.)

\begin{figure}[hbt!]
\ifFull
\centering\includegraphics[scale=0.5, trim=0in 3in 5.25in 0in, clip]{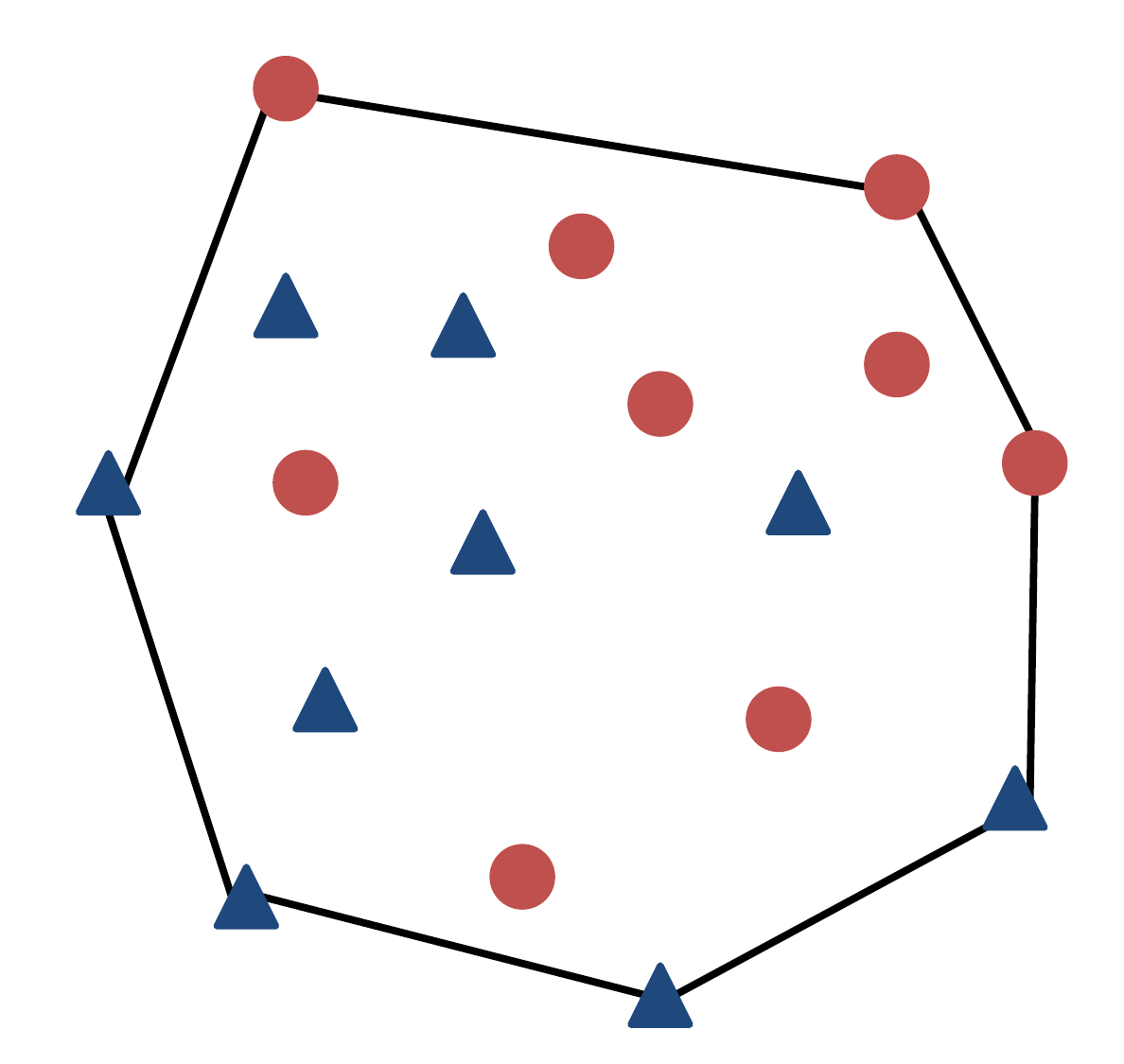}
\else
\vspace*{-8pt}
\centering\includegraphics[width=6.5in]{convexhull}
\vspace*{-2.2in}
\fi
\caption{A two-party convex hull problem. Alice holds a set, $A$, of blue
points (triangles) and Bob holds a set, $B$, of red points (circles). Each
should learn only which of their points are on the convex hull of
$A\cup B$.}
\label{fig:ch}
\end{figure}

\subsection{Background for Our Approach}
Before we present our oblivious convex hull
algorithm, we briefly mention alternative
approaches that do not result in optimal oblivious algorithms.
A standard approach to use the divide-and-conquer paradigm, by 
dividing $A$ into its first and second halves, $A_1$ and $A_2$, and recursively
construct the upper hulls, $\UH(A_1)$ and $\UH(A_2)$,
of the points in $A_1$ and $A_2$, respectively.
The problem that remains is to find the upper tangent segment between
$\UH(A_1)$ and $\UH(A_2)$, and label all the points under this tangent segment 
to have this segment as their upper convex hull edge.
So let us focus on the computation of the upper tangent, $(q,r)$, between 
$\UH(A_1)$ and $\UH(A_2)$.
Since the points in $\UH(A_1)$ and $\UH(A_2)$ are ordered by their 
$x$-coordinates, we can apply a binary search of Overmars and
van~Leeuwen~\cite{ol-mcp-81} to find the upper hull.
The main idea of this method is to probe at two vertices $p\in \UH(A_1)$ and
$q\in \UH(A_2)$ and use the relative positions of the edges next to $p$ and
$q$ to determine which portions of $\UH(A_1)$ and/or $\UH(A_2)$ 
that can be safely
eliminated as candidates for the upper tangent points $q$ and $r$,
respectively.  The case analysis is shown in Figure~\ref{fig:ch-search} and 
results in a running time of $O(\log n)$
for finding the common upper tangent.

\begin{figure}[hbt!]
\vspace*{-2pt}
\centering\includegraphics[width=3.3in]{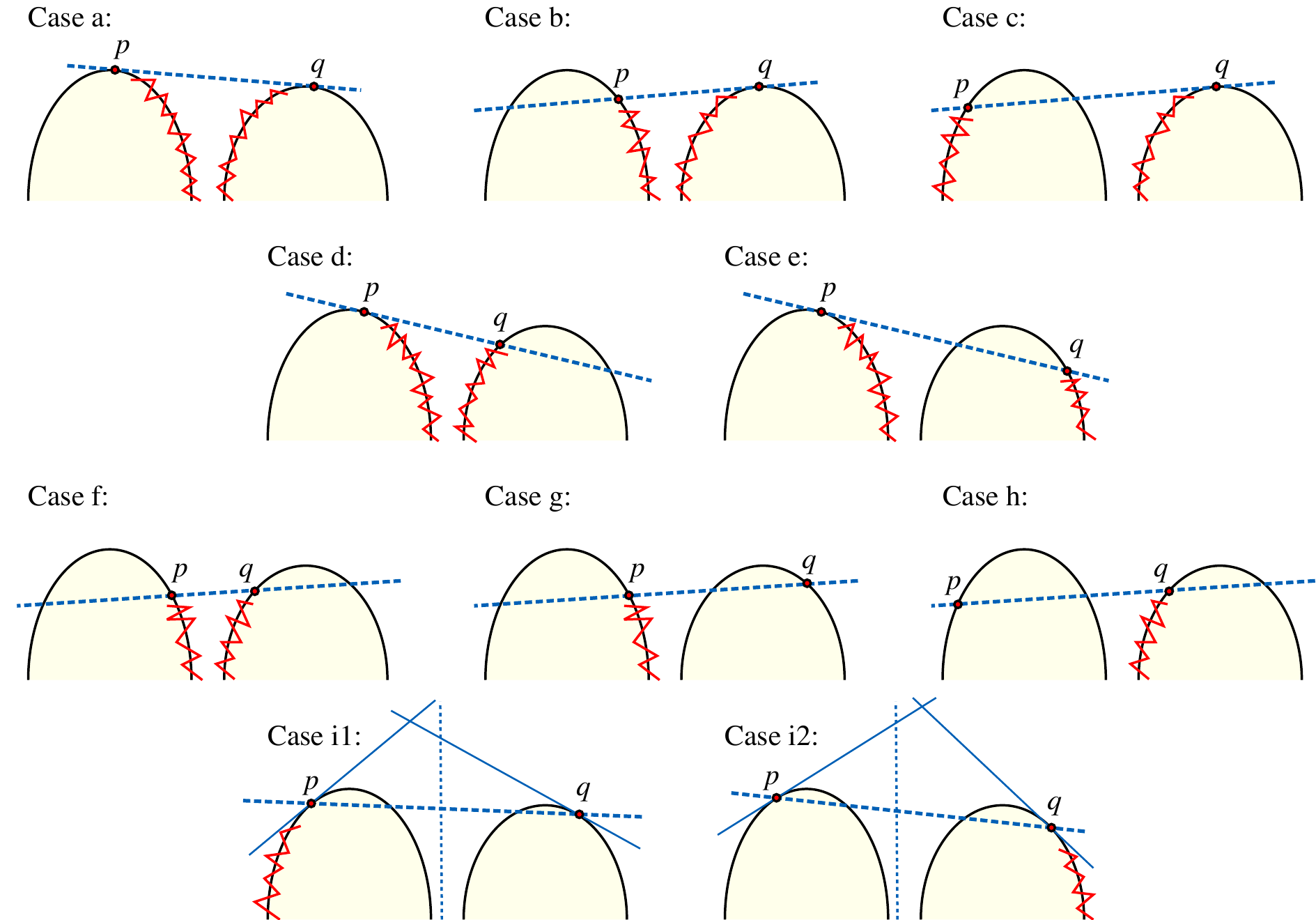}
\vspace*{-10pt}
\caption{The cases for the Overmars-van~Leeuwen binary search for a common
upper tangent. 
Each case shows the relative orientation of two points, which
are respectively on two different upper hulls, and the portion(s) of each
hull that can be eliminated as potential locations for the points of tangency.
}
\label{fig:ch-search}
\end{figure}

Broadcasting this tangent 
to members of $\UH(A_1)$ and $\UH(A_2)$ and comparing
each edge to the tangent allows us to then produce the desired output.
Unfortunately, the binary search process of Overmars and van~Leeuwen is not
oblivious. Moreover, implementing it with the oblivious RAM 
simulation of~\cite{go-spsor-96} blows up the running time
for tangent finding
from $O(\log n)$ to $O(n\log^4 n)$, which results in a running time of
$O(n\log^5 n)$ for oblivious convex hull construction.
\ifFull
Nevertheless, our method borrows from this approach
the idea of using divide-and-conquer.
\fi

An alternative divide-and-conquer approach
is suggested by the parallel convex hull algorithm
of Atallah and Goodrich~\cite{ag-pasft-88}.
In this case, rather than perform a binary search to find the upper tangent,
they perform an $O(n^{1/2})$-way
search in parallel. This $O(n^{1/2})$-way approach is another technique we
borrow, but in a way quite different from that used by
Atallah and Goodrich, as their method is nonoblivious in how it finds upper
tangents.
Implementing their algorithm using a simulation of a PRAM with an oblivious
RAM requires $O(n\log n)$ time to find the upper tangent, which leads to a
running time of $O(n\log^2 n)$ for oblivious convex hull construction.
Instead, our oblivious method is based on a novel geometric
characterizations of when edges of subsequences 
of the two upper hulls come before or after
the tangent line in terms of an ordering by decreasing slopes.
This alternative approach allows us to achieve a running time of $O(n\log n)$
for oblivious convex hull construction.

\subsection{Our Oblivious Convex Hull Method}
Given a set of points, $A$, ordered by their $x$-coordinates,
we define the format of the upper hull, $\UH(A)$, of $A$, to be as follows
For each point $p$ in $A$, we label $p$ with the edge, $e(p)$, of the upper
convex hull that is intersected by a vertical line through the point $p$.
If $p$ is itself on the upper hull, then we label $p$ with the upper hull
edge incident to $p$ on the right.
To simplify the description of our algorithm, we assume
no two points in $A$ share the same $x$-coordinate.

Our method is as follows.
Divide $A$ into its first and second halves, $A_1$ and $A_2$, by a vertical
line $V$ and recursively
construct the upper hulls, $\UH(A_1)$ and $\UH(A_2)$,
of the points in $A_1$ and $A_2$, respectively, with representations as
described above.
In addition,
we assume, without loss of generality, that $\UH(A_1)$ and $\UH(A_2)$ are
each augmented with vertical dummy edges incident on the first and last vertices
in $\UH(A_1)$ and $\UH(A_2)$ respectively.
The problem that remains is to find the upper tangent segment between
$\UH(A_1)$ and $\UH(A_2)$, and label all the points under this tangent segment 
to have this segment as their upper convex hull edge.
So let us focus on the computation of the upper tangent, $(q,r)$, between 
$\UH(A_1)$ and $\UH(A_2)$.

We aim to assign each edge $e$ of $\UH(A_1)$ and
$\UH(A_2)$ one of two labels:
\begin{itemize}
\item
$L$: the tangent line of $\UH(A_1\cup A_2)$ with the same slope as $e$ is
tangent to~$\UH(A_1)$.
\item
$R$: the tangent line of $\UH(A_1\cup A_2)$ with the same slope as $e$ is
tangent to~$\UH(A_2)$.
\end{itemize}
In some intermediate steps, however, we may be unable to determine yet
whether an edge should be labeled $L$ or~$R$; In such cases, we temporarily label it with an~$X$.

If an edge of $\UH(A_1)$ gets label $L$, then it is part
of~$\UH(A_1\cup A_2)$. If it gets instead label $R$, then we know it
is not part of~$\UH(A_1\cup A_2)$. Similar considerations hold for
edges of $\UH(A_2)$ and their labels.
Thus, if we can label each edge in $\UH(A_1)$ as $L$ or $R$,
with no edges labeled $X$, then we can immediately
identify the vertex of tangency on
$\UH(A_1)$---it is the vertex incident on the two edges
respectively labeled $L$ and~$R$.
All edges before this point will be labeled $L$ and all edges after this
point will be labeled~$R$.
Likewise, a similar property holds for~$\UH(A_2)$.
To aid in our characterization of the edges of $\UH(A_1)$ and $\UH(A_2)$, 
we have the following.

\begin{lemma}
\label{lem:test}
Let $H_1$ and $H_2$ be (possibly disconnected) subsequences of the edges of $\UH(A_1)$ and $\UH(A_2)$, respectively, ordered by decreasing
slopes and both containing the dummy vertical edges 
from $\UH(A_1)$ and $\UH(A_2)$ as their respective first and last edges.
For a non-vertical edge $e$ in $H_1$ (resp., $H_2$), 
let $d$ be the edge in $H_2$ (resp., $H_1$) with smallest slope
greater than $e$ and let $f$ be the edge in $H_2$ (resp., $H_1$)
with largest slope less than $e$, noting that $d$ and $f$ are not
necessarily consecutive edges in $\UH(A_2)$ (resp., $\UH(A_1)$).
Then there is a simple comparison rule involving only $d$, $e$, and $f$, with
the result being one of the following outcomes:
\begin{itemize}
\item $e$ is correctly labeled $L$ or $R$.
\item $e$ is labeled $X$, but $d$ is correctly
labeled $L$ and $f$ is correctly labeled $R$.
\end{itemize}
\end{lemma}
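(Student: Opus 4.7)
The plan is to phrase the problem in terms of the support functions $b_i(\sigma) = \max_{p \in A_i}(p_y - \sigma p_x)$, each convex and piecewise linear in $\sigma$ with breakpoints at the edge slopes of $\UH(A_i)$. For any edge $g \in \UH(A_i)$ of slope $\sigma_g$, the line $\ell_g$ is exactly $y = \sigma_g x + b_i(\sigma_g)$. Because $A_1$ and $A_2$ are $x$-separated, the subgradient comparison $b_1'(\sigma) - b_2'(\sigma) > 0$ makes $b_1 - b_2$ strictly increasing, vanishing at a unique common-tangent slope $\sigma^*$; an edge of $\UH(A_1)$ bears label $L$ iff its slope exceeds $\sigma^*$, and an edge of $\UH(A_2)$ bears label $R$ iff its slope lies below $\sigma^*$. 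The task reduces to locating $\sigma^*$ relative to $\sigma_e$ using only $d$, $e$, $f$.

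The comparison rule I propose has three prongs: (i) let $P = \ell_d \cap \ell_f$; if $\ell_e(P_x) > P_y$, output $L$ for $e$; (ii) otherwise, if some endpoint of $d$ or $f$ lies strictly above $\ell_e$, output $R$ for $e$; (iii) otherwise, output $X$ for $e$, together with the collateral labels $d = L$ and $f = R$. Prong (i) is justified by convexity of $b_2$: writing $\sigma_e = (1-\lambda)\sigma_d + \lambda\sigma_f$, a direct computation gives $P_y = \sigma_e P_x + (1-\lambda)\,b_2(\sigma_d) + \lambda\,b_2(\sigma_f) \geq \sigma_e P_x + b_2(\sigma_e)$, so $\ell_e(P_x) > P_y$ forces $b_1(\sigma_e) > b_2(\sigma_e)$, i.e., $\sigma_e > \sigma^*$. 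Prong (ii) uses that $b_2(\sigma_e) \geq v_y - \sigma_e v_x$ for every vertex $v$ of $\UH(A_2)$: if some endpoint of $d$ or $f$ lies above $\ell_e$, its contribution exceeds $b_1(\sigma_e)$, giving $b_2(\sigma_e) > b_1(\sigma_e)$, hence $\sigma_e < \sigma^*$.

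The main obstacle is prong (iii): when neither preceding test fires, we must show $\sigma^* \in [\sigma_f, \sigma_d]$ so that the collateral labels are correct. The plan is to argue the contrapositive using the $x$-separation, which WLOG places $A_1$ at $x < 0$ and $A_2$ at $x > 0$, making $b_1$ monotonically increasing and $b_2$ monotonically decreasing in $\sigma$. If $\sigma^* > \sigma_d$, then $b_1(\sigma_d) < b_2(\sigma_d)$; the right endpoint $d_R$ of $d$ lies on $\ell_d$ with $d_{R,x} > 0$, so its contribution is $d_{R,y} - \sigma_e d_{R,x} = b_2(\sigma_d) + (\sigma_d - \sigma_e)\,d_{R,x} > b_2(\sigma_d) > b_1(\sigma_d) \geq b_1(\sigma_e)$, placing $d_R$ above $\ell_e$ and firing prong (ii). Symmetrically, if $\sigma^* < \sigma_f$, then $b_1(\sigma_e) \geq b_1(\sigma_f) > b_2(\sigma_f) \geq (1-\lambda)\,b_2(\sigma_d) + \lambda\,b_2(\sigma_f)$, which rearranges to $\ell_e(P_x) > P_y$ and fires prong (i). Hence prong (iii) is entered only in the intermediate regime, as required. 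The remaining delicate point is handling the dummy vertical sentinel edges at the ends of $H_1$ and $H_2$, which act as $\pm\infty$ slopes: one must check that the inequalities above extend to the limiting case where $\ell_d$ or $\ell_f$ is interpreted as a vertical line, but this follows from the same endpoint-contribution bounds after taking the appropriate one-sided limits.
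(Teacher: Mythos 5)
Your proof is correct, but it takes a genuinely different route from the paper's. The paper argues primally: it forms the intersection points $a=\ell(e)\cap\ell(d)$ and $b=\ell(e)\cap\ell(f)$, classifies their positions relative to the separating line $V$ into four cases (one of which is impossible), and disposes of each case by appealing to the Overmars--van Leeuwen comparison figure, choosing the most pessimistic applicable subcase. You instead work in the dual, with the support functions $b_i(\sigma)$: the separation by $V$ makes $b_1-b_2$ strictly monotone, the labels $L$/$R$ become the sign of $b_1-b_2$ at the edge's slope, and your three prongs are exactly an upper bound on $b_2(\sigma_e)$ by convexity (the chord through $\sigma_d,\sigma_f$, realized geometrically as the point $P=\ell_d\cap\ell_f$), a lower bound on $b_2(\sigma_e)$ from individual vertices, and a contrapositive argument showing that if $\sigma^*$ escapes the interval $[\sigma_f,\sigma_d]$ then one of the two bounds must already be decisive. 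This buys an explicit, checkable comparison rule and a proof that does not lean on a figure or on the OvL case names, at the cost of some bookkeeping you only sketch: the limiting case where $d$ or $f$ is a dummy vertical edge (which does go through via the asymptotic subgradient of $b_2$, but deserves the one-line computation), and the fact that the ``WLOG $e\in H_1$'' symmetry swaps the outputs of prongs (i) and (ii) (an endpoint of $d$ or $f$ above $\ell_e$ then certifies $L$ rather than $R$). Neither of these is a real gap, but both should be stated if this argument were to replace the paper's.
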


\begin{proof}
Without loss of generality, let us assume $e$ is 
in $H_1$ and $d$ and $f$ are in $H_2$, and all three edges have distinct
slopes.
Also, let $V$ be the vertical line separating $A_1$ and $A_2$,
and let $\ell(e)$, $\ell(d)$, and $\ell(f)$ denote the lines containing
$e$, $d$, and $f$, respectively.
We distinguish four cases, with respect to
the points $a=\ell(e)\cap \ell(d)$ and $b=\ell(e)\cap \ell(f)$,
as to whether
(i) $a$ and $b$ are both to the left of $V$,
(ii) $a$ and $b$ are both to the right of $V$,
(iii) $a$ is to the left of $V$ and $b$ is to the right of $V$,
or
(iv) $a$ is to the right of $V$ and $b$ is to the left of $V$.
Note first that case (i) is impossible, since it would require 
the portion of $\ell(d)$ to the right of $V$ be
completely above $\ell(f)$ to the right of $V$ (since $d$ has to be below
$\ell(f)$ to the right of $V$).
For cases (ii), (iii), and (iv),
we illustrate the possibilities in Fig.~\ref{fig:cases}.
For each instance, if there is more than one possible applicable case
according to the Overmars-van~Leeuwen classification (OvL cases a through~i2),
we choose the one that
is the most pessimistic with respect to how much we can determine about the
labels of the respective edges.
In Case~(ii), $e$ is labeled $R$ or $d$ is labeled $L$ and $f$ is
labeled~$R$. In Case~(iii), $e$ is labeled~$R$. Finally, in Case~(iv),
$e$ is labeled~$L$.
\qed
\end{proof}

\begin{figure}[b!]
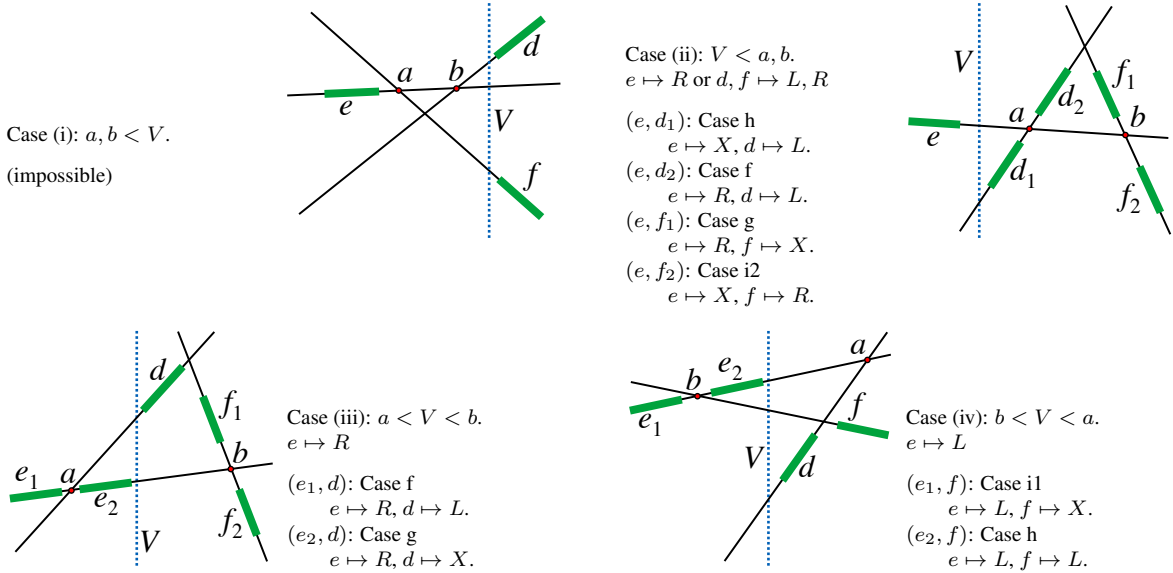

\input cases.tex
\caption{The possible cases for the configurations of $\ell(e)$, $\ell(d)$,
and $\ell(f)$. In some of the cases there are two possible locations for $e$,
$d$, or $f$ relative to the lines containing these edges, in which case we
use subscripts $1$ or $2$ to distinguish the two relative locations. For each
scenario, we list a set of comparisons and their results according to the
OvL classification, together with the resulting
classification of $e$, $d$, and/or~$f$. Note that in Case~(ii)
of particular note is the $(e_2,f)$
comparison in Case~(iv), for this is an example of an OvL Case~h where we can
classify $e_2$ as $L$, since it is impossible for any edges of $\UH(A_2)$ to
be above $\ell(e)$ in this scenario.}
\label{fig:cases}
\end{figure}

Let $N=\lceil\sqrt{n}\rceil$, and let $A'_1$ and $A'_2$ respectively denote
the subarrays of $A_1$ and $A_2$ consisting of the points with indices at
multiples of~$N$.
Let $H_1$ be the subsequence of the recursively constructed upper hull
$\UH(A_1)$ consisting of the edges that have at least one
endpoint vertex in~$A'_1$. Define $H_2$ similarly with respect to
$\UH(A_2)$ and~$A'_2$.
Thus, $H_1$ and  $H_2$ have size~$O(n^{1/2})$.
% $H_1$ and $H_2$ are each ordered subsequences of the edges 
% of $\UH(A_1)$ and $\UH(A_2)$, possibly with repeated edges, 
% with each subsequence being of size $O(n^{1/2})$.
We perform a \emph{round} in our computation as follows:
\begin{enumerate}
\item
We perform an (oblivious)
brute-force computation to compare every pair of edges in
$H_1\cup H_2$, so as to determine for each non-vertical edge $e$ 
in $H_1$ (resp., $H_2$), 
the edge, $d$, in $H_2$ (resp., $H_1$) with smallest slope
greater than $e$ and $f$, the edge in $H_2$ (resp., $H_1$)
with largest slope less than $e$.
\item
For each
edge $e$, use Lemma~\ref{lem:test} to label $e$ with $L$, $R$, or $X$, as a
blackbox computation applied to each edge, with its associated edges $d$ and
$f$.
\item
Perform another brute-force comparison of every pair of edges 
in $H_1\cup H_2$ to label edges $d$ and $f$, as $L$ and $R$, 
for some edge $e$ whose blackbox computation determined these 
labels for $d$ and $f$.
\item
Perform a forward scan and reverse scan on $H_1$ and $H_2$ using 
a blackbox computation that
labels any edge to the left of an $L$-labeled edge as $L$ and any edge to the
right of an $R$-labeled edge as~$R$.
\end{enumerate}
Note that all of the above steps can be performed obliviously in $O(n)$ time.

\begin{lemma}
After the above round computation completes,
at most one of the subsequences $H_1$ or $H_2$ can contain edges labeled $X$.
\end{lemma}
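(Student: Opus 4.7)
The plan is to argue the contrapositive: if any edge of $H_1$ is still labeled $X$ after the round, then no edge of $H_2$ remains labeled $X$; by symmetry, the same implication holds with the roles of $H_1$ and $H_2$ swapped, which gives the claim. So suppose some $e\in H_1$ still carries label $X$ after the round.

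By Lemma \ref{lem:test}, the two edges $d,f\in H_2$ singled out by $e$ --- the edge of $H_2$ with smallest slope greater than $\mathrm{slope}(e)$ and the edge of $H_2$ with largest slope less than $\mathrm{slope}(e)$ --- are \emph{correctly} labeled $L$ and $R$, respectively. The brute-force re-pass of step 3 of the round is precisely what installs these deductions onto $d$ and $f$ in $H_2$, so after step 3 the edge $d$ carries an $L$ and the edge $f$ carries an $R$.

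The key observation to bring in next is that on any upper hull the slope strictly decreases from left to right, so the slope order on $H_2$ coincides with the left-to-right positional order on $H_2$. By the extremality built into the definitions of $d$ and $f$, no edge of $H_2$ has slope strictly between $\mathrm{slope}(d)$ and $\mathrm{slope}(f)$, so $d$ and $f$ are \emph{adjacent} in the left-to-right order on $H_2$, with $d$ immediately preceding $f$. The forward and reverse scans of step 4 then propagate the $L$ leftward from $d$ and the $R$ rightward from $f$; since there is nothing between $d$ and $f$, every edge of $H_2$ is covered by one of these two propagations and emerges with a definite $L$ or $R$ label. Hence $H_2$ contains no $X$-labeled edges after the round, as desired.

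The only point requiring care is the adjacency of $d$ and $f$ inside $H_2$, and this is immediate from how $d,f$ are chosen in Lemma \ref{lem:test} together with the monotonicity of slopes on an upper hull; no further geometric case analysis is needed beyond what Lemma \ref{lem:test} already delivers. One minor sanity check to include in the write-up is that the slope-based notions of ``left'' and ``right'' used by the scan of step 4 agree with the positional notions inherited from $\UH(A_2)$, which again follows from the strict monotonicity of upper-hull slopes.
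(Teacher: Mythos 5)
Your proof is correct and takes essentially the same route as the paper's: an edge $e$ that survives the round with label $X$ must have been labeled $X$ by Lemma~\ref{lem:test}, which forces correct $L$ and $R$ labels on $d$ and $f$ in the other list, and the step-4 scans then eliminate every $X$ there. The only difference is that you explicitly justify \emph{why} the scans finish the job --- the slope-adjacency of $d$ and $f$ in the other list --- a point the paper's own two-sentence proof asserts without elaboration.
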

\begin{proof}
After each application of Lemma~\ref{lem:test} to an edge $e$ in $H_1$ or
$H_2$, we either label $e$
with an $L$ or $R$ or we label $e$ with an $X$ and 
its associated edges $d$ and $f$ as $L$ and $R$.
Note that in the latter case the forward and reverse scans will then
completely label all the edges of the other list (not containing $e$)
as $L$ or $R$; hence, this list will contain no edges labeled $X$.
That is, if we label any edge $e$ as $X$, then we label all the edges in the
other list as $L$ or $R$.
If, on the other hand, we don't label any edge $e$ in $H_1$ (resp., $H_2$) as
$X$, then there are clearly no edges in $H_1$ (resp., $H_2$) that are labeled
as~$X$.
\qed
\end{proof}

Although at most one of $H_1$ or $H_2$ can 
have an edge labeled $X$, the edges in this list may almost all be
labeled~$X$.
Thus the above round computation will reduce the candidate 
tangent vertices in the representation of
one of $\UH(A_1)$ or $\UH(A_2)$ (but not necessarily both) to a subregion of size 
$O(n^{1/2}$).
If it reduces $\UH(A_1)$, call it a \emph{red-1} reduction and
if it reduces $\UH(A_2)$, call it a \emph{blue-1} reduction.
A second application of the round computation will either reduce the other list
to a subregion of size $O(n^{1/2})$ (i.e., it will be a red-1 or blue-1
reduction) or it will reduce the first subregion
to a single vertex of tangency, which we call a \emph{red-2} or \emph{blue-2}
operation, depending on whether it occurs to $\UH(A_1)$ or $\UH(A_2)$.
In either case, two more applications of the round computation will determine
the tangent edge between $\UH(A_1)$ and $\UH(A_2)$.

As described above this sequence of applications of the round computation is
non\-ob\-liv\-i\-ous, but it can be made oblivious by 
considering all valid sequences of red-1, blue-1, red-2, and blue-2, in turn.
One of these constant number of operation sequences 
will be the correct sequence to find the upper tangent. By trying all
these possibilities obliviously (with conditional no-ops for
paths not taken)
we will perform the one 
the leads to the determination of the tangent between $\UH(A_1)$ and
$\UH(A_2)$.

An important implementation detail is the oblivious
method for doing the reduction in a red-1 or
blue-1 operation. A red-2 or blue-2 operation, which reduces a set of
size $O(n^{1/2})$ to an object of size $O(1)$ can be done obliviously
in a single scan using a constant-size register.
In the red-1 or blue-1 operation, we have an array $A$ of size $O(n)$ 
for which we want to isolate a subregion of size $O(n^{1/2})$ based on the labels of its boundary elements, and copy it into a
buffer, $B$, of size $O(n^{1/2})$.
For each subregion, we read the boundary elements and use them to set a register flag, $F$,
that determines whether this is the region that should be copied.
We then read the $i$-th element, $B[i]$, from our buffer, 
perform a conditional swap (based on $F$) with the $i$-th element 
in this region of $A$, and write the result back to $B[i]$.
Thus, in an oblivious way, we can copy a subregion of interest into the buffer
$B$, with the total computation taking $O(n)$ time.

Summarizing, we can perform the determination of the upper tangent of $\UH(A_1)$
and $\UH(A_2)$ obliviously in $O(n)$ time, including the scan of $\UH(A_1)$
concatenated with $\UH(A_2)$ to relabel any vertices under this tangent with
an identifier for this tangent.
A similar construction applies to the
lower hull of $A$. Thus, we have the following result.

\begin{theorem} \label{thm:convex-hull}
Given a set $S$ of $n$ points in the plane, we can 
obliviously construct a representation of
the convex hull of $S$ in $O(n\log n)$ time.
\end{theorem}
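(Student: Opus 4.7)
The plan is to bolt together the pieces already developed in the excerpt into a clean divide-and-conquer recursion. First I would obliviously sort the input array $A$ by $x$-coordinate in $O(n\log n)$ time using the cited AKS-style oblivious sorter. Then I would run the recursion on the sorted array: split $A$ at its middle index into $A_1$ and $A_2$ (a data-independent split, since the array is already sorted), recursively compute the edge-labeled representations of $\UH(A_1)$ and $\UH(A_2)$ in the format specified, and then obliviously merge them using the upper-tangent machinery of Lemma \ref{lem:test} and the round computation. Solving the recurrence $T(n) = 2T(n/2) + O(n)$ yields the claimed $O(n\log n)$ bound, and a symmetric computation for the lower hull doubles the cost by only a constant factor.

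The merge step is where the work concentrates. For a given recursive call on $n$ points I would extract the $\Theta(\sqrt{n})$-sized sampled subsequences $H_1$ and $H_2$ (the edges of $\UH(A_1)$ and $\UH(A_2)$ incident to every $\lceil\sqrt{n}\rceil$-th vertex), and then execute the round computation. Each round runs in $O(n)$ time obliviously: the brute-force pairwise comparisons cost $O(|H_1\cup H_2|^2) = O(n)$, the per-edge blackbox application of Lemma \ref{lem:test} is linear, and the forward/reverse scans are linear. Since a single round reduces at least one of the two hulls to a subregion of size $O(\sqrt{n})$, a constant number of rounds (at most four, walking through the valid red-1/blue-1/red-2/blue-2 sequences) is enough to pin down the tangent vertex on each hull. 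Because at run time we cannot reveal which of those sequences is the correct one, I would iterate through all constantly many such sequences, guarding each step with a conditional no-op controlled by a masked flag, and I would implement the red-1/blue-1 extraction obliviously by sweeping over all candidate $\Theta(\sqrt{n})$-size subregions of the hull, maintaining a flag $F$ that is set exactly on the correct subregion, and performing a conditional swap into a size-$O(\sqrt{n})$ buffer $B$ at each position. Once the tangent $(q,r)$ is identified, one more linear oblivious scan over the concatenation of $\UH(A_1)$ and $\UH(A_2)$ relabels every point whose $x$-coordinate lies under $(q,r)$, using a conditional assignment gated by a blackbox comparison of the point's $x$-coordinate against the endpoints of $(q,r)$; all other points retain their recursively computed edge labels.

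The main obstacle is not the correctness of the tangent search, which is already handed to us by Lemma \ref{lem:test} and its case analysis, but maintaining a strictly data-oblivious control flow through the entire merge while still achieving $O(n)$ work per level. In particular, I would have to verify that enumerating the constant-length menu of round sequences, and simulating each round's conditional no-op when it is not the active branch, together touch only memory locations determined by $n$ and the recursion depth; and that the red-1/blue-1 buffer copy genuinely runs in $O(n)$ rather than $O(n^{3/2})$ by sweeping each array element exactly once as we slide the candidate window. Once those implementation details are pinned down, composing the upper- and lower-hull computations and unwinding the recurrence yields Theorem \ref{thm:convex-hull}.
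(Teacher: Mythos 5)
Your proposal follows essentially the same route as the paper: oblivious sort, divide-and-conquer with the $O(\sqrt{n})$-sampled subsequences $H_1,H_2$, the round computation driven by Lemma~\ref{lem:test}, oblivious enumeration of the constant-length red-1/blue-1/red-2/blue-2 sequences with conditional no-ops, and the flag-gated buffer copy for the subregion reduction, yielding $T(n)=2T(n/2)+O(n)=O(n\log n)$. The one detail you flag---that the red-1/blue-1 copy must be $O(n)$ rather than $O(n^{3/2})$---resolves exactly as the paper does it: the candidate subregions are the $O(\sqrt{n})$ \emph{disjoint} blocks between consecutive sampled vertices, so each array element is touched once, not once per window position.
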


Using Theorem~\ref{thm:convex-hull}, we can then apply standard cryptographic circuit simulation
methods to derive a secure multiparty computation involving private data 
(e.g., see~\cite{bnp-fssmp-08,clos-uctms-02,da-smpcp-01,dz-passm-02,%
m-smpcm-06,mnps-fstpc-04}).
Hence, we obtain a secure two-party protocol for Alice and Bob to
determine which of their respective points belong to the convex hull
of the union of their $n$ points with a communication complexity of
$O(n\log n)$.

\begin{corollary} \label{cor:convex-hull} There is a secure two-party
  protocol that computes the convex hull of the union of two private
  sets of points of total size $n$ with $O(n\log n)$ communication
  complexity.
\end{corollary}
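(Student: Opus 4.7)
The plan is to derive the corollary as a direct consequence of Theorem~\ref{thm:convex-hull} combined with the standard blackbox-by-blackbox cryptographic simulation paradigm for data-oblivious algorithms already outlined in Section~1. Since the algorithm of Theorem~\ref{thm:convex-hull} runs in $O(n\log n)$ time using only constant-size blackbox operations (slope comparisons, the Lemma~\ref{lem:test} classification gadget, conditional swaps, and line/tangent tests), each such blackbox can be replaced by a small SMC subprotocol (e.g., a garbled circuit or a mixed-protocol equivalent along the lines of the references cited in the introduction), and the enclosing control flow and memory access pattern can be executed openly on cryptographically masked values.

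Concretely, I would proceed in the following steps. First, Alice and Bob secret-share (or mask) each of their private points and interleave the shares into a single array of size $n$; they then sort this array by $x$-coordinate using an oblivious sorting network such as AKS, implementing each comparator by a constant-size SMC gadget. This produces, at total communication cost $O(n\log n)$ times the per-gadget cost, the sorted masked input required by Theorem~\ref{thm:convex-hull}. Second, the two parties execute the oblivious convex-hull algorithm in lockstep: every memory read/write is performed on masked values at publicly determined addresses, and every invocation of a constant-size blackbox (including the round computation's brute-force comparisons, the Lemma~\ref{lem:test} labeling rule, the forward/reverse scans, and the conditional-swap copy into the size-$O(n^{1/2})$ buffer) is replaced by an SMC subprotocol that outputs masked results. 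Because the algorithm is data-oblivious, the sequence of addresses and blackbox calls depends only on $n$, so no information leaks through the control flow. Third, at termination, Alice and Bob jointly unmask, for each party, only the label of that party's own points (or equivalently, the bit indicating whether each own point lies on the hull), so the revealed information coincides exactly with what can be inferred from the answer.

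The main thing to check is the communication accounting and the argument that nothing beyond the output leaks. For the accounting, there are $O(n\log n)$ blackbox invocations (dominated by the sorting and the recurrence $T(n)=2T(n/2)+O(n)$ solving the merge step of Theorem~\ref{thm:convex-hull}), and each is of constant size, so the total communication is $O(n\log n)$ times the $O(1)$ per-gadget cost of the underlying SMC primitive, which is $O(n\log n)$ as claimed. For privacy, the standard simulation-based argument applies: any view of a semi-honest party can be simulated from $n$, the masked transcript, and that party's own input and output, by composing the simulators of the underlying SMC subprotocols along the public, input-independent schedule dictated by the oblivious algorithm. The only subtle point—and the one I would write most carefully—is to argue that the final selective unmasking step reveals exactly ``which of my points lie on the hull'' and no auxiliary positional information, which follows because the output format of Theorem~\ref{thm:convex-hull} labels each input point individually and the unmasking is applied only to entries whose input index belongs to the requesting party.
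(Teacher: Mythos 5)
Your proposal is correct and follows essentially the same route as the paper, which simply invokes Theorem~\ref{thm:convex-hull} together with standard cryptographic circuit-simulation methods for data-oblivious algorithms to obtain the two-party protocol with $O(n\log n)$ communication. Your write-up merely fills in the details (masked inputs, per-blackbox SMC gadgets, the communication accounting, and the selective unmasking of the output) that the paper leaves implicit.
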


% ------------------------------------------------------------------------

\section{Some Combinatorial Problems}
We next turn to oblivious algorithms for some combinatorial problems that
crop up in our methods for our other geometric algorithms.

\subsection{All Nearest Larger Values}
In the \emph{All Nearest Larger Values} (ANLV) 
problem~\cite{bsv-odlpa-93}, we are given an array
$A$ of $n$ numbers, such that, for each value, $A[i]$,
we want to determine the values $A[j]$ and $A[k]$, where 
$j$ is the largest index less than $i$ with $A[j]>A[i]$ and
$k$ is the smallest index greater than $i$ with $A[k]>A[i]$.
As observed by Berkman {\it et al.}~\cite{bsv-odlpa-93},
this problem is actually a generalization of the problem of merging two
sorted lists, $C$ and $D$, since these lists can be merged by
solving an ANLV problem for an array that consists of 
a reversal of $C$ followed by $D$.
Our oblivious method for solving the ANLV problem,
where we assume without
loss of generality that the values are distinct, 
is different from that of Berkman {\it et al.}~and is as follows.
\begin{enumerate}
\item
Build a complete binary tree, $T$, ``on top'' of the items in $A$
and perform a bottom-up tournament computation to compute, for each
$v$ in $T$, the value, $M(v)$, which is the maximum value stored in a
descendent of $v$ in $T$.
This is a straightforward oblivious computation that takes $O(n)$ time.
\item
For each leaf $x$ in $T$,
let $l(x)$ denote the lowest node in $T$ 
such that $l(x)$ is a left sibling of an ancestor of $x$ in $T$ and
$M(l(x))>A[x]$, where $A[x]$ is the value in $A$ associated with $x$.
Likewise, let $r(x)$ be denote the lowest node in $T$ 
such that $r(x)$ is a right sibling of an ancestor of $x$ in $T$ and
$M(r(x))>A[x]$, where $A[x]$ is the value in $A$ associated with $x$.
We compute $l(x)$ and $r(x)$, which are initially null for each $x$,
in a divide-and-conquer computation, with respect to a node $v$ in $T$.
In this computation,
we recursively compute the $l$ and $r$ labels for nodes in the subtrees
rooted at $v$'s left and right children, $u$ and $w$, producing lists, $D(u)$
and $D(w)$, of labeled descendents of $u$ and $w$.
Then, we scan $D(u)$ to assign
$r(x)=w$ for each $x$ such that $A[x]<M(w)$ and $r(x)$ was previously null. 
Also, we scan $D(w)$ list to assign
$l(x)=u$ for each $x$ such that $A[x]<M(u)$ and $l(x)$ was previously null. 
We then concatenate these two lists of labeled nodes (some of which are still
null) to create the list $D(v)$ for $v$, which is passed up to $v$'s parent.
This step runs in $O(n\log n)$ time.
\item
For each node $v$ in $T$ with left child $u$ and right child $w$,
we perform a scan of $D(u)$ to find the smallest 
value $A[x]$ such that $A[x]>M(w)$, if it exists, and we scan back through
$D(w)$ to label the value $A[y]$ in $w$'s list such that $A[y]=M(w)$
to show that $A[x]$ (if it exists)
is the nearest larger value to the left of $A[y]$.
Likewise,
we perform a scan of $D(w)$ to find the smallest 
value $A[x]$ such that $A[x]>M(u)$, if it exists, and we scan back through
$D(u)$ to label the value $A[y]$ in $u$'s list such that $A[y]=M(u)$
to show that $A[x]$ (if it exists)
is the nearest larger value to the right of $A[y]$.
(These scans are used to take care of the boundary values for each node.)
Finally, when we are done with all the scans, we perform a sorting 
step to report back to each node the labels that have been found for it.
This step runs in $O(n\log n)$ time.
\item
For each element $x$ stored in a leaf of the binary tree $T$, 
let us create two tuples, 
$(l(x),A[x],{\rm ``right''},-i,L,R)$ 
and
$(r(x),A[x],{\rm ``left''},i,L,R)$,
where $i$ is the index of the value $A[x]$ in $A$, and $L$ and $R$ are
the left and right ANLV's for $A[x]$ (most of which are
probably null at this point).
Perform an oblivious sort of all these tuples, using a lexicographic
ordering rule, to produce the sorted list, $B$, of such tuples.
This step takes $O(n\log n)$ time.
\item
Scan the list $B$ in reverse order. During this scan we 
maintain three registers,
$v$, $L$ and $R$. The register $v$ is a label of the current node, $v$, for
which we are computing ANLV's for, that is, the first coordinate of the
tuples we are scanning.
The scan for each $v$ is essentially a merge of its left and right children's
lists of nodes whose ANLV is determined by this merge at $v$.
The register $L$ is maintained to be
the smallest ``left'' value in a tuple with this $v$ as
its first ($r(x)$) coordinate.
The register $R$ is maintained to be
the smallest ``right'' value in a tuple with this $v$ as
its first ($l(x)$) coordinate.
Whenever we encounter a tuple, if it is a ``right'' tuple, we identify 
its left ANLV as $L$, and if it is a ``left'' tuple, we identify
its right ANLV as $R$, assuming we have not already determined this value
previously (which coincides with the point where we reset the register $v$).
This scan can be done obliviously in $O(n)$ time.
\item
Perform one more sort to bring together the computed 
left and right ANLV's for each node $x$ in $T$.
This step can be done obliviously in $O(n\log n)$ time, and it completes the
algorithm.
\end{enumerate}
Thus, we have the following.

\begin{theorem}
Given an array $A$ of $n$ values, we can obliviously 
solve the ANLV problem for $A$ in $O(n\log n)$ time.
\end{theorem}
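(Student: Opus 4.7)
The proof must establish three things about the six-step algorithm: it is data-oblivious, it returns the left and right nearest larger values for every entry of $A$, and it runs in $O(n\log n)$ time. My plan is to handle obliviousness and running time together in a step-by-step pass, then devote the bulk of the argument to correctness, with particular attention to the boundary handling in Step~3 and to the reverse-scan invariant in Step~5.

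For obliviousness and time, each step reduces to a constant number of oblivious primitives whose control flow depends only on $n$ and on the shape of the complete binary tree $T$, never on the values in $A$: a bottom-up tournament (Step~1), divide-and-conquer updates of the descendant lists $D(v)$ (Steps~2--3), oblivious sorting via an AKS-type network (Steps~4 and~6), and single-pass scans with constant-size registers (Step~5). The bottom-up tournament costs $O(n)$, each sort costs $O(n\log n)$, and the scans cost $O(n)$. The only nontrivial bookkeeping is that $|D(v)|$ equals the number of leaves under $v$, so $\sum_v |D(v)| = O(n\log n)$, which bounds the total work in Steps~2 and~3.

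For correctness, I would prove by induction on tree height that after Step~2 the following invariant holds: for every leaf $x$, $l(x)$ is the lowest node that is a left sibling of an ancestor of $x$ with $M(l(x))>A[x]$, and symmetrically for $r(x)$ (null if no such node exists). The inductive step follows because when a node $v$ with children $u,w$ is processed, the scan of $D(w)$ sets $l(x)=u$ for exactly those leaves $x$ in $w$'s subtree whose label first becomes determinable at level $v$. Given this invariant, the nearest larger value to the right of $A[x]$ is the leftmost leaf of the subtree rooted at $r(x)$ whose value exceeds $A[x]$, and analogously on the left. Step~3 is needed precisely to handle the corner case of a leaf that realizes the maximum $M(w)$ of its own sibling subtree, which the recursion in Step~2 never tags from within $w$; the boundary scan of the opposite child's list supplies the missing label.

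The hard part of the proof will be verifying that the reverse lexicographic scan in Step~5 correctly implements, for each internal node $v$, the ``merge'' that pairs every left-looking or right-looking tuple with the smallest larger value residing in the opposite sibling's subtree. Concretely, I would argue that after the sort of Step~4, tuples sharing the first coordinate $v$ are contiguous and ordered so that scanning them in reverse lets the registers $L$ and $R$ maintain the minimum ``left'' and ``right'' index seen so far within that group; when a tuple of the opposite direction is encountered, the register is by construction the nearest larger value, because all previously scanned tuples in the group have strictly greater $A[\cdot]$. The register $v$ is reset locally whenever the first coordinate changes. A careful check that the signed index encoding, direction field, and lexicographic order interact correctly completes the argument, and the final sort in Step~6 then merely reunites each leaf with its two computed answers.
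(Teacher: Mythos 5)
Your plan tracks the paper's six-step algorithm exactly and supplies the correctness invariants that the paper itself leaves implicit (the paper gives only the algorithm with per-step running times and no separate correctness argument), so this is essentially the same approach. The one point your ``careful check'' of Step~5 must not gloss over is that the nearest larger value to the right of a leaf $x$ with $r(x)=w$ is the leftmost prefix maximum $q$ of $w$'s subtree exceeding $A[x]$, and if $A[q]>M(u)$ then $q$'s own tuple has a different first coordinate and is sorted into a different group, so the register cannot be filled solely from opposite-direction tuples within the group and must instead be seeded from the Step-3 boundary answers and propagated down the decreasing-value scan.
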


\subsection{List Ranking}
In the \emph{list ranking}
problem~\cite{am-dplr-88,cv-dctao-86},
we are given a linked list, $L$, stored in the
records of an array of size $n$, for which we want to compute, for each node
$v$ the number of nodes from $v$ to the end of the list, using pointer
hopping as the distance metric.

\begin{theorem}
\label{thm:list-rank}
Given a linked list $L$ of $n$ nodes, we can obliviously perform a list
ranking of the nodes in $L$, using a computation that always runs in 
$O(n\log n)$ time and succeeds with high probability.
\end{theorem}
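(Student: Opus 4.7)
My plan is to adapt the randomized parallel list-ranking scheme of Anderson and Miller~\cite{am-dplr-88} to an oblivious sequential setting. I represent $L$ as an array of records $(\mathit{id}, \mathit{succ}, w)$, where $w$ is a rank-contribution weight initialized to~$1$. The algorithm proceeds in phases. Each phase flags a random independent subset of roughly one quarter of the live records, splices those records out of the list (transferring each removed node's $w$ onto its predecessor's outgoing edge), and compacts the live portion of the array. After $O(\log n)$ phases only one live record remains, whose rank is trivial, and an unwinding pass then reinstalls the removed records in reverse order of their removal, each inheriting a rank computed from its just-restored predecessor.

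The key obliviousness claim is that every phase (and every unwinding step) is realized by a constant number of oblivious sorts of the current live sub-array---using either AKS~\cite{aks-scps-83} or the randomized oblivious sort of~\cite{g-rsaso-10}---together with linear scans and constant-size blackbox operations. Concretely: (i)~each live record draws an independent fair coin; (ii)~one oblivious sort keyed by $\mathit{succ}$, followed by a scan, flags live nodes whose coin is $1$ but whose successor's coin is $0$, producing an independent splice-set; (iii)~a second oblivious sort brings each flagged node adjacent to its unique predecessor, so a single scan can invoke blackbox circuits that update the predecessor's successor-pointer and weight and mark the candidate dead; and (iv)~a stable oblivious sort by the live/dead flag compacts the array. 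The unwinding pass is symmetric, using one oblivious sort per level to deliver predecessor ranks back to the spliced-out nodes.

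Because a live node is an independent splice candidate with probability exactly $1/4$, a Chernoff bound followed by a union bound over the $O(\log n)$ phases shows that the live count shrinks by at least a factor of $7/8$ per phase with probability $1 - n^{-\Omega(1)}$. To enforce the deterministic $O(n \log n)$ time bound, I would cap the number of phases at $c\log n$ and cap the size of the phase-$i$ sort at $O((7/8)^i n)$; if the actual live count ever exceeds its cap, the computation halts with a failure flag. The work in phase $i$ is then deterministically $O((7/8)^i n \log n)$, whose geometric sum is $O(n\log n)$, and with high probability the cap is never triggered so the output is the true list ranking.

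The main obstacle will be verifying that every operation whose natural description involves a data-dependent pointer dereference---locating each node's predecessor, transferring weights, compacting, and propagating ranks during unwinding---reduces to a bounded number of oblivious sorts and scans on the live sub-array, so that no memory address visible in the control flow depends on the private input. Once that reduction is in hand, the geometric-shrinkage Chernoff/union-bound analysis and the geometric summation to $O(n\log n)$ are routine.
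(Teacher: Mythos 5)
Your construction---random-mate splicing via independent coin flips, realization of each splice round and each compaction by a constant number of oblivious sorts and scans, geometric shrinkage of the working array, and a reverse unwinding pass to propagate ranks---is the same strategy the paper uses, and the reduction of all pointer-chasing to sorts keyed by successor names is exactly the intended mechanism. The per-phase cost bound $O((7/8)^i n\log n)$ and its geometric summation to $O(n\log n)$ are also fine.

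There is, however, a genuine gap in your high-probability analysis of the endgame. Your claim that ``a Chernoff bound followed by a union bound over the $O(\log n)$ phases shows that the live count shrinks by at least a factor of $7/8$ per phase with probability $1-n^{-\Omega(1)}$'' fails once the live count is small: with $m$ live nodes the Chernoff failure probability is $e^{-\Omega(m)}$, which is $n^{-\Omega(1)}$ only while $m=\Omega(\log n)$, and is a constant bounded away from $0$ for constant $m$ (e.g., with two live nodes a phase makes progress only with probability $1/4$). Since your cap schedule $(7/8)^i n$ must descend all the way to $1$, the last $\Theta(\log\log n)$ phases each fail with constant probability, so the algorithm as written aborts with constant probability rather than $n^{-\Omega(1)}$. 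The paper avoids this by changing regimes: it runs the shrink-and-compact loop only until the live count drops below $n/\log n$, and then performs $O(\log n)$ further link-out rounds on that sub-array \emph{without} further shrinkage guarantees; each such round costs $O((n/\log n)\log n)=O(n)$, so the tail costs $O(n\log n)$ total, and after $c\log n$ rounds every node has been spliced out with probability $1-n^{-\Omega(1)}$ (each live node is removed in each round with probability at least a constant, independently across rounds). You need this or an equivalent fix---for instance, keeping the cap at $\Theta(\log n)$ once it would fall below that and allowing $O(\log n)$ extra phases there---to legitimately claim success with high probability while preserving the deterministic $O(n\log n)$ bound.
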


\begin{proof}
To solve the list ranking problem on a list $L$ of $n$ nodes
obliviously, we perform the following actions:
\begin{enumerate}
\item
Create for each node $v$ in $L$ a field, $d(v)$, which stores an indication
of the distance from $v$ to the end of $L$.
Initially, each $d(v)=1$.
\item
Generate a random bit, $b(v)$, for every node $v$ in $L$.
\item
Perform two oblivious sorts so as to ``link out'' each node $w$ 
with $b(w)=0$ that follows a node $v$ with $b(v)=1$, storing with $w$ a
reference to name of the node, $u$, that currently follows $w$ in $L$.
In addition, with this link out step, we update $d(v)=d(v)+d(w)$.
\item
Repeat the previous two steps a constant, $c$, number of times, until it is
likely, with high probability, that the connected part of $L$
has at most half as many nodes.
\item
If $|L|>n/log n$,
perform an oblivious sorting
step and compression to reduce the working storage used for $L$ to be half
its previous size.  Then repeat the above computation starting with Step~2.
\item
If $|L|\le n/log n$, then perform $O(\log n)$ link-out steps, where we apply
a link-out operation, like the one above, for every node in parallel, using
oblivious sorts to perform the actions in an oblivious fashion.
The total running time for all these 
actions is $O(((n/\log n)\log n)\log n)=O(n\log n)$.
\item
Reverse the above link-out steps, in reverse order, so that,
for each node $w$ that was linked out in step $i$, 
we perform two oblivious sorting steps
to communicate the information needed so that we can update $d(w)=d(w)+d(u)$,
where $u$ was the node that followed $w$ when it was linked out.
\end{enumerate}

Since we reduce the number of nodes, and the working storage for $L$, 
by half every
$c$ steps, with high probability, and we then reverse these actions 
to finally solve the
list ranking problem, we get that the running time of this method
is a geometric sum that is $O(n\log n)$.
Moreover, since we terminate the halving process and switch to a parallel
link-out process when $|L|\le n/\log n$, we get that this
method succeeds in computing a list ranking for $L$ with high probability.
\qed
\end{proof}

\subsection{Tree Contraction}
In a
\emph{tree contraction}~\cite{adkp-sptca-89,mr-ptcia-85,rmm-lrptc-93}
computation, we are given a proper binary tree $T$ 
such that each leaf node
is associated with a value and each internal node is associated with an
arithmetic
operation to be computed on its two children.
The goal is to efficiently 
compute the value of each node in $T$ in an oblivious fashion, 
even if the height of $T$ is $O(n)$.

\begin{theorem}
\label{thm:tree}
Given a binary arithmetic tree, $T$, with $n$ nodes,
we can obliviously compute the value of each internal node of $T$ in
$O(n\log n)$ time, in a computation that succeeds with high probability.
\end{theorem}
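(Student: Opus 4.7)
The plan is to adapt Miller--Reif random-mate tree contraction to the oblivious setting, following the same template as the oblivious list ranking proof (Theorem~\ref{thm:list-rank}). Store $T$ as an array of $n$ records, where each record holds a parent pointer, a status flag, and a constant-size ``pending function'' field that represents the composition of operations folded into this node so far; initially this field holds the node's own arithmetic operation (for internal nodes) or value (for leaves). As in standard tree contraction, we assume that the operations at internal nodes are composable in constant space, as is the case for affine functions of one argument.

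A single \emph{round} consists of (i) generating a random bit for each active node, (ii) a RAKE step in which, for each leaf $v$ with bit $0$, the value at $v$ is folded into its parent's pending function and $v$ is marked inactive, and (iii) a COMPRESS step in which, for each node $v$ whose parent $u$ has just become unary, the random-mate rule (e.g., $u$ has bit $0$, $v$ has bit $1$) triggers composition of $u$'s pending unary function into $v$'s, redirects $v$'s parent pointer past $u$, and marks $u$ inactive. Each RAKE and COMPRESS step is implemented by a constant number of oblivious sorts over the working array to route information between a node and its parent or sibling, so one round on $m$ active records runs in $O(m\log m)$ time. By the standard Miller--Reif analysis, a constant fraction of active nodes becomes inactive per round in expectation, so after a constant $c$ number of rounds the active count has halved with high probability. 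At that point we perform an oblivious compaction sort to halve the working storage and repeat, exactly as in Theorem~\ref{thm:list-rank}. Once the working storage drops to at most $n/\log n$, we finish with $O(\log n)$ additional rounds at the full working-storage size, contributing $O((n/\log n)\cdot \log n\cdot \log n)=O(n\log n)$ time.

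After the root's pending function is fully resolved, we reverse the sequence of RAKE and COMPRESS operations in the opposite order of their application; in each reversal, the now-known value at the surviving ancestor is substituted into the pending function of the previously contracted node to produce its value, again using a constant number of oblivious sorts per reversed step to route information. Summing the geometric series of working-storage sizes yields $O(n\log n)$ total work, and the high-probability success bound is inherited directly from the list ranking analysis. The main obstacle will be verifying that each combined RAKE\,+\,COMPRESS round shrinks the active set by a constant factor with high enough probability for the geometric halving to succeed w.h.p.; this is handled by the classical random-mate analysis, which transfers without change to our setting because the random bits driving the contraction are independent of the node values, so obliviousness does not degrade the probabilistic bounds.
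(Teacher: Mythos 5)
Your proposal is correct, but it takes a genuinely different route from the paper. The paper adapts the deterministic rake-only contraction of Abrahamson {\it et al.}: it first runs the (randomized) oblivious list-ranking of Theorem~\ref{thm:list-rank} once, to number the leaves via an Euler tour, and then repeatedly rakes the odd-numbered leaves that are left children, then those that are right children, halves the leaf numbers, and recurses on the geometrically shrinking tree; the only source of failure probability is the single list-ranking call, and no COMPRESS step is ever needed because the alternating odd-leaf rake never produces long unary chains. You instead run Miller--Reif random-mate rake-and-compress, re-randomizing every round and mirroring the list-ranking proof's structure directly (constant rounds per halving, compaction, and the $n/\log n$ cutoff before a final $O(\log n)$-round phase). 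Each choice has a cost and a benefit: the paper's version makes the per-round progress deterministic and pushes all probabilistic reasoning into one black-box subroutine, at the price of needing the Euler-tour/list-ranking preprocessing; your version dispenses with that preprocessing but must carry the concentration argument through every phase --- you correctly flag that the w.h.p.\ halving after $O(1)$ random-mate rounds is the point needing care, and it does go through (by bounded-dependence concentration on the fresh random bits) precisely because you keep the active set of size $\Omega(n/\log n)$ until the final phase. Both yield $O(n\log n)$ oblivious time with high-probability success, and your constant-space composability assumption on the pending unary functions matches the paper's implicit assumption about its $O(1)$-sized edge labels.
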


\begin{proof}
Adapting a parallel algorithm of Abrahamson {\it et al.}~\cite{adkp-sptca-89},
we can solve the tree contraction problem obliviously as follows.
\begin{enumerate}
\item
Perform a list ranking operation to number the leaves of $T$ from $1$ to $N$.
Using the algorithm described below, this step takes $O(n\log n)$ time and succeeds
with high probability.
\item
For each node $v$ in $T$ that is an odd-numbered leaf 
and a left child of its parent, link
out $v$ and its parent, making $v$'s sibling to be the new child of $v$'s
grandparent. In doing this operation, record for $v$ and its parent
the iteration it is removed and the names of the grandparent and sibling
nodes at this point.
In addition, in the link-out operation, we label the child-parent edge with
an $O(1)$-sized algebraic operation to apply in going from the child value to
the parent (which is composable when we combine previously-computed edges in
a link-out).
This step can performed obliviously using $O(1)$ oblivious sorting steps.
\item
For each node $v$ in $T$ that is an odd-numbered leaf 
and a right child of its parent, link
out $v$ and its parent, making $v$'s sibling to be the new child of $v$'s
grandparent. In doing this operation, record for $v$ and its parent
the iteration it is removed and the names of the grandparent and sibling
nodes at this point, along with any edge updates as in the previous step.
This step can performed obliviously using $O(1)$ oblivious sorting steps.
\item
If $|T|>1$,
divide the leaf number of each leaf node by $2$ and repeat the above two
steps.
\item
Reverse the above actions to compute the value of each internal node.
\end{enumerate}
This completes the proof.
\qed
\end{proof}

% ------------------------------------------------------------------------

\section{Quadtrees and Well-Separated Pair Decompositions}
Having described our methods for some fundamental combinatorial problems,
we describe in this section our oblivious algorithms for 
constructing a compressed quadtree and for forming a well-separated pair
decomposition.

\subsection{Constructing a Compressed Quadtree}
A \emph{compressed quadtree}~(e.g., 
see~\cite{bet-pcqqt-99,c-wspdl-08,s-sdsqo-89}), 
for a set $S$ of $n$ points in the
plane, normalized to have the unit square, $[0,1]\times [0,1]$,
as a bounding box, is
defined as follows.
A \emph{quadtree} (e.g., see~\cite{s-sdsqo-89}) 
for $S$
is defined recursively, where we create a node $v$ for the 
current bounding box and, if this bounding box has more than a given
threshold number of points of $S$, 
then we divide this box into four equal-sized boxes 
as quadrants, and we recursively construct subtrees for each non-empty quadrant,
with the nodes for these non-empty quadrants having $v$ as their parent.
(See Figure~\ref{fig:quadtree}.)
If we then compress any chains of nodes in this quadtree
that have only one child, then we get the compressed quadtree.
This definition is clearly not something that leads to an oblivious
construction algorithm, of course, but we can in fact
construct a compressed quadtree for $S$ obliviously in $O(n\log n)$ time.

\begin{figure}[hbt]
\centering\includegraphics[width=3.3in]{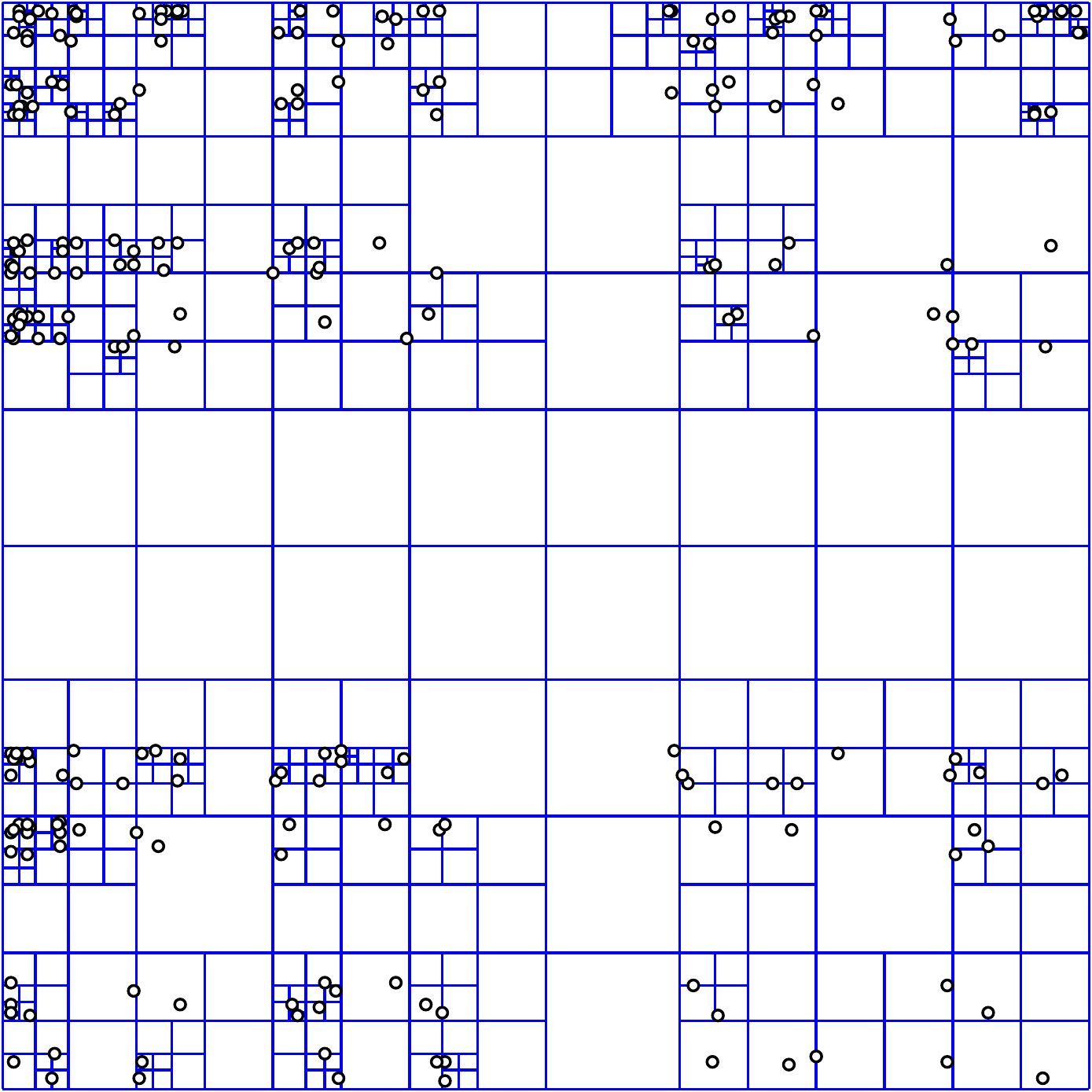}
\caption{\label{fig:quadtree} A (region) quadtree for a set of points.
(Public-domain image by David Eppstein.)}
\end{figure}

An alternative method for constructing a compressed quadtree, as
observed by several researchers (e.g.,
see~\cite{bet-pcqqt-99,c-wspdl-08,s-sdsqo-89}), 
is based on a sorting of the points of $S$ according to the 
\emph{interleaving order}.
In the interleaving order, we take each point $(x,y)$ and interleave the bits
for $x$ with the bits for $y$ is a standard shuffling, and we compare points
according to this order. This order can also be interpreted
geometrically~\cite{c-wspdl-08} for the sake of a comparison-based sorting
algorithm.
Once we have the points of $S$ stored in an array $A$ according to the
interleaving order, we note, as shown by 
Bern {\it et al.}~\cite{bet-pcqqt-99}, that the nodes
contained in any compressed quadtree box 
form a contiguous subsequence in $A$.
Moreover, we can label each transition between two adjacent points in $A$
with the box that is formed along that transition, and we can then identify
the compressed quadtree box that contains each point $p$ in $A$ by performing
an ANLV computation, where we use box size to determine values in
this ANLV computation.
Given this information, we can perform a postprocessing 
step consisting of two oblivious sorting
steps to determine the adjacency information between the parent and child
nodes in the compressed quadtree.
Thus, we have the following.

\begin{theorem}
\label{thm:quadtree}
Given a set $S$ of distinct points in the plane, we can obliviously construct a
compressed quadtree for $S$ in $O(n\log n)$ time.
\end{theorem}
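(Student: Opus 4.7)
The plan is to implement the alternative construction sketched in the paragraph preceding the theorem, taking care that every step is data-oblivious and fits into an $O(n \log n)$ budget. First, I would sort the $n$ input points into an array $A$ according to the interleaving (Morton/Z) order. Since the comparison between two points under this order can be carried out by a constant-size blackbox that looks only at the two coordinate pairs (equivalently, by the geometric highest-differing-bit test described by Chan), any oblivious comparison-based sorting network such as the AKS network or Goodrich's randomized Shellsort yields this step in $O(n \log n)$ time obliviously.

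Next, I would assign to each adjacent pair $(A[i], A[i{+}1])$ a label $b_i$ describing the smallest compressed-quadtree box containing both points; this label is computed as an $O(1)$ blackbox from the coordinates of the two points (essentially by extracting the most significant bit position at which the interleaved coordinates differ) and so a single parallel scan gives these labels obliviously in linear time. By the observation of Bern et al., the set of points lying in any compressed-quadtree box occupies a contiguous subinterval of $A$, and the ``box'' $b_i$ for each internal transition corresponds exactly to a node of the compressed quadtree.

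To identify, for each point $A[i]$, the smallest compressed quadtree box that contains it, I would apply the oblivious ANLV algorithm to the sequence of box sizes associated with the $b_i$'s. For each $i$, the nearest transition to the left with a strictly larger (containing) box and the nearest transition to the right with a strictly larger box delimit the subinterval for the tightest containing node; this gives, for every point, a pointer to its parent node in the compressed quadtree. By the earlier ANLV theorem this takes $O(n \log n)$ time obliviously. The main obstacle here is that ANLV is phrased for numeric values, whereas boxes are compared by containment; I would break ties by lexicographically combining the $\log_2(1/\text{side length})$ level with the Morton index of the box corner, so that the ``larger'' relation on labels coincides with ``properly contains'' along the Morton order, letting ANLV output exactly the desired parent transitions.

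Finally, to recover the parent/child adjacency structure of the compressed quadtree as an explicit tree, I would form, for each point, a record naming its containing transition (its parent node) and, for each transition, a record naming itself; two oblivious lexicographic sorts then group children under their common parent and allow parent pointers and lists of up to four children to be written back in a single oblivious scan. Every component runs in $O(n \log n)$ oblivious time, so summing the costs gives the claimed bound and completes the construction.
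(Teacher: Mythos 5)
Your proposal follows essentially the same route as the paper: obliviously sort by the interleaving (Morton) order, label each adjacent transition with the box it creates via an $O(1)$ blackbox, run the oblivious ANLV algorithm on the box sizes to find each point's containing node, and finish with a constant number of oblivious sorts to assemble parent/child adjacency. The extra detail you give on making box containment a total order compatible with ANLV is a reasonable filling-in of a point the paper leaves implicit.
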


\subsection{Well-Separated Pair Decomposition}
Another important geometric computation is
the construction of a well-separated pair decomposition (WSPD) for a set $S$
of $n$ points in the plane.
In a WSPD~\cite{ck-dmpsa-95}, 
we are given a parameter $s$ for
which we want to construct a set of pairs,
$(A_1,B_1),(A_2,B_2),\ldots,(A_k,B_k)$, 
such that every pair of points $p$ and $q$ are represented by a pair
$(A_i,B_i)$ such that $p\in A_i$ and $q\in B_i$, or vice versa,
and such that there are balls of radius $r$ containing $A_i$ and $B_i$
respectively so that these balls are of distance at least $sr$ apart.
In our applications we choose $s>2$ to be a constant, e.g., $s=2.1$ will do.

We should note that some authors also insist 
that each pair of points $p$ and $q$
be represented exactly once in some $(A_i,B_i)$ pair in WSPD.
But duplicate
representation is actually not a problem for most WSPD applications,
including the ones we consider, so we don't make this additional
requirement. What is essential in our definition 
is that the total number of pairs in a WSPD be linear.

Given a compressed quadtree for a set $S$ of $n$ points
in the plane, Chan~\cite{c-wspdl-08}, shows that a WSPD
can be constructed in $O(n)$ time by a simple
(non-oblivious) recursive search algorithm defined on the nodes of the
compressed quadtree. 
Using a technique of Callahan and Kosaraju,
Chan shows that the time and combinatorial complexity for his algorithm is 
$O(n)$ by using a packing argument, which shows that the number of compressed
quadtree boxes that are no smaller than a box $b$ but are too close to be
candidates for a well-separated pair with $b$ is bounded by a constant
depending on $s$.

We define an alternative, oblivious construction algorithm for a
well-separated pair decomposition by turning this construction and argument
``on its head.''
That is, we use the packing argument itself to construct the WSPD.
In particular, for each box $B$ in the compressed quadtree, $T$, there are
$O(s^2)=O(1)$ boxes in the uncompressed quadtree, $T'$,
that are the same size as $B$ and are not well-separated from $B$. And for
each such box, $B'$, there are $O(1)$ immediate (children and grandchildren)
descendents of the edge in $T$ corresponding to where $B'$ is located in
$T'$.
These immediate descendents and the children of $B$ in $T$
together form candidates for well-separated pairs. And the collection 
of all such sets of candidate pairs form a superset of the pairs that are
considered by the WSPD algorithm of
Chan~\cite{c-wspdl-08}.
Thus, if we can consider all such pairs and only keep the ones that form
well-separated pairs, then we can construct a WSPD
of size~$O(n)$.

The challenge is to collect all such pairs.
We do this as follows.
\begin{enumerate}
\item
For each box $B$ in the compressed quadtree, $T$,
form the set, ${\cal B}(B)$ of $O(s^2)=O(1)$ boxes 
that are the same size as $B$ in the uncompressed quadtree and are not
well-separated from $B$.
\item
In parallel,
for each $B$ in $T$, pick a box $B'$ in ${\cal B}(B)$ that was previously
unconsidered, 
and create two points $p_B$ and $q_B$ inside $B'$ that do not fall
inside the same child box of $B'$ in the uncompressed quadtree, $T'$.
Call these points ``dummy points.''
\item
Create a compressed quadtree, $\hat T$, for all the points in $S$ together
with the dummy points created in the previous step. Note that the
box $B'$ is in $\hat T$, even if it is not in $T$.
\item
Label each point in $S$ with a $1$ and each dummy point with a $0$ and
perform an tree compression on $\hat T$, where each internal-node operation
is a binary OR, to determine the binary value of
each internal node in $\hat T$. Note that the nodes of $\hat T$ that also
exist in $T$ will have at least two children that have binary values equal to
$1$. 
\item
Remove all the nodes with binary values equal to $0$ from $\hat T$ and
construct an Euler tour of its edges, perform a list ranking in that Euler
tour, and then an ANLV computation on the nodes in this list using node
degree as the item values.
This computation gives us, for each box $B'$ in a ${\cal B}(B)$ set, the
highest nodes in $T$ whose boxes are contained in $B'$.
These nodes and their children, together with the children of $B$, form
candidates for well-separated pairs. Identify which ones are indeed
well-separated and compress them into a list of answers produced in this
round.
\item
Repeat Steps~2 through 5 above until we have considered each 
box in a set ${\cal B}(B)$, for its box $B$.
\end{enumerate}

Each of the above steps runs in $O(n\log n)$ time, with the list ranking and
tree evaluation steps succeeding with high probability. Likewise, there
are only $O(1)$ iterations to this algorithm. So we get the following.

\begin{theorem}
\label{thm:wspd}
Given a set $S$ of $n$ points in the plane, and a compressed quadtree $T$
for $S$, we can construct a
well-separated pair decomposition for $S$, with each set being
associated with a node in $T$, in $O(n\log n)$ time with an
oblivious computation that succeeds with high probability.
\end{theorem}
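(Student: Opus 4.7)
The plan is to verify three properties of the six-step algorithm: (i) the output is a valid WSPD of linear size, (ii) every step is data-oblivious, and (iii) the total running time is $O(n\log n)$ with high probability. The structural correctness rests on Chan's packing argument: for each compressed quadtree box $B$, only $O(s^2) = O(1)$ boxes of the same side length in the uncompressed quadtree fail to be well-separated from $B$, so $|\mathcal{B}(B)| = O(1)$. This bounds the outer loop to $O(1)$ iterations and, more importantly, shows that the candidate pairs generated across all iterations form a superset of the pairs considered by Chan's recursive WSPD algorithm, whose output has linear size. Filtering the candidates to those that are actually well-separated then yields a valid WSPD of size $O(n)$.

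For obliviousness and timing, I would argue step by step. Step~1 attaches to each node $B$ a constant-size descriptor list $\mathcal{B}(B)$ computed from $B$'s canonical coordinates alone, which is data-independent. Step~2 writes $O(1)$ dummy points per box in a single parallel scan. Step~3 rebuilds a compressed quadtree on $n + O(n) = O(n)$ points using Theorem~\ref{thm:quadtree} in $O(n\log n)$ time. Step~4 performs tree contraction over $\hat{T}$ with the binary OR operation via Theorem~\ref{thm:tree} in $O(n\log n)$ time w.h.p. Step~5 requires an oblivious Euler tour of the pruned $\hat{T}$ (built from its parent pointers using $O(1)$ oblivious sorts), followed by list ranking via Theorem~\ref{thm:list-rank} and an ANLV computation keyed on node depth, each running in $O(n\log n)$ time. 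Testing well-separation of each surviving candidate, discarding duplicates, and compacting the survivors into the output list are accomplished by a constant number of scans and oblivious sorts. With $O(1)$ outer iterations, the overall bound is $O(n\log n)$ w.h.p.

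The main obstacle will be Step~5: showing that the pruned-$\hat{T}$ Euler tour together with the ANLV computation actually delivers, for each dummy-carrying box $B'$, the highest nodes of $T$ whose boxes lie inside $B'$. The idea is that compressing $\hat{T}$ to nodes labelled ``1'' exposes exactly the skeleton of ancestors and topmost $T$-descendants of $B'$, and that using depth (or box side length) as the ANLV value recovers those topmost descendants as the nearest-larger-valued neighbours of $B'$ in the Euler sequence. The more delicate companion claim---that the resulting candidate set contains every pair Chan's algorithm would examine---I would establish by induction on the depth of Chan's recursion, matching each of his recursive sibling-versus-neighbour comparisons with a candidate pair produced in some outer iteration here. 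Once these two combinatorial claims are in place, the correctness, obliviousness, and timing assertions of the theorem follow directly.
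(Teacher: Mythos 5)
Your proposal follows essentially the same route as the paper: the paper's own justification consists of the packing-argument discussion (the candidate pairs form a superset of those examined by Chan's algorithm, hence $O(1)$ outer iterations and linear output size) together with a step-by-step assertion that each stage is an $O(n\log n)$ oblivious computation via the quadtree, tree-contraction, list-ranking, and ANLV theorems, which is exactly what you verify. The only divergence is cosmetic---the paper keys the ANLV computation in Step~5 on node degree rather than on depth/box size---and your plan is, if anything, more explicit than the paper about the two claims (the Euler-tour/ANLV recovery of the topmost descendants of each $B'$, and the superset property relative to Chan's recursion) that the paper asserts without detailed proof.
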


% ------------------------------------------------------------------------

\section{Closest Pairs and All Nearest Neighbors}
Having presented all the above algorithmic techniques, we are now
ready to describe our oblivious algorithm for solving the all nearest
neighbors problem.

So, let us assume we are given a set $S$ of $n$ points in the plane
for which we want to solve the all nearest neighbors problem.
At a high level it is an oblivious adaptation of a parallel all
nearest neighbors algorithm of 
Callahan and Kosaraju~\cite{ck-dmpsa-95}.
\begin{enumerate}
\item
Construct a compressed quadtree $T$ for
the points of $S$, using the method of Theorem~\ref{thm:quadtree}.
\item
Construct a well-separated pairs decomposition (WSPD), based on $T$,
using the method of Theorem~\ref{thm:wspd}.
\item
Discard each pair $(A,B)$ in the WSPD if neither $A$ nor $B$ is a
singleton set.
(Note: if all we want is a closest pair, then we can skip the remaining steps
and find the closest of all the singleton-singleton pairs in the WSPD.)
\item
For each box $B$ in the WSPD for which there is at least one remaining pair,
$(\{a\},B)$, construct the set $N(B)$ of all such points, $a$. We represent this information obliviously as a collection of pairs $(a,B)$ where $a$ is a point and $B$ is a box, padded with null items.
\item
For each box $B$, partition the plane into a set of $O(1)$ wedges having the center $o$ of $B$ as their apex, and prune $N(B)$ to contain only the closest point to $o$ within each wedge (by replacing the pairs representing other points by null items), with the number of wedges chosen according to the parameters of the WSPD so that for each pruned point there is another point of $N(B)$ that is closer to it than every point in $B$ is. The set of remaining points
in each set $N(B)$ will have size $O(1)$.
\item
Using a tree contraction algorithm of 
Callahan and Kosaraju~\cite{ck-dmpsa-95}, 
construct for each leaf $v$ of $T$, which is
associated with a point $b$, the set $N'(b)$, which is the set of 
all points $a$ such that $(a,B)$ is in the WSPD for an ancestor of
$v$ in $T$ and such that $a$'s distance to $b$ is no larger than the minimum
distance from $a$ to other points in $N(B)$.
In other words, $N'(b)$ consists of all those points of $S$ that
could have $b$ as a nearest neighbor.
This step takes $O(n\log n)$ time to implement obliviously, by
Theorem~\ref{thm:tree}.
\item
For each point $a$ in a set $N'(b)$, construct the pair $(a,b)$. Sort
all these pairs so as to bring together, for each point $a$, those
points that could be a nearest neighbor to $a$. Then perform a scan
of this list to determine, for each $a$, its nearest neighbor.
This step takes $O(n\log n)$ time.
\end{enumerate}
Each of the above steps can be implemented in $O(n\log n)$ time,
either because of the specific results from the referenced theorems,
or because the step is easily performed obliviously by making a
constant number of calls to an oblivious sorting routine.

\begin{theorem} \label{thm:neighbors}
Given a set $S$ of $n$ points in the plane, we can compute the
nearest neighbor in $S$ for each point in $S$ and find a closest
pair of points in $S$ with an oblivious
computation running in $O(n\log n)$ time.
\end{theorem}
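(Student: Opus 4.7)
The plan is to follow the Callahan--Kosaraju framework for all nearest neighbors, but to implement each stage using the oblivious primitives we have already established, taking care that the size bounds on intermediate data structures are deterministic (so that control flow depends only on $n$) while data-dependent choices are confined to blackbox comparators.

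First I would invoke Theorem~\ref{thm:quadtree} to build a compressed quadtree $T$ for $S$ obliviously in $O(n\log n)$ time, then Theorem~\ref{thm:wspd} to construct a WSPD with separation parameter $s>2$, also in $O(n\log n)$ time, producing a list of pairs of total length $O(n)$. The next step is to restrict attention to singleton--box pairs $(\{a\},B)$: rather than physically deleting non-singleton pairs (which would be data-dependent), I would tag each pair with a blackbox-computed Boolean and carry both tagged and null entries through subsequent oblivious sorts, so that the array length remains a function of $n$ alone. Grouping the tagged pairs by their box $B$ via an oblivious sort produces, for each box $B$, the multiset $N(B)$ of candidate nearest neighbors coming from singleton partners; this multiset can have linear total size across all $B$.

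The key geometric reduction is the wedge pruning argument of Callahan--Kosaraju: partitioning the plane around the center of each box $B$ into a constant number of wedges (whose count depends on $s$), only the point of $N(B)$ closest to $B$'s center within each wedge can be a nearest-neighbor candidate. To execute this obliviously, I would, for each wedge direction, sort the $(a,B)$ pairs by $B$ and by a blackbox-computed wedge index, then run a segmented scan that retains only the minimum-distance representative per $(B,\text{wedge})$ cell by replacing losers with null items. After combining the results across the constant number of wedges, each $N(B)$ contains at most $O(1)$ real entries, so the total list has size $O(n)$ with a schedule of operations that depends only on $n$.

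Next I would push these pruned candidates down to the leaves of $T$: for each leaf representing a point $b$, I want $N'(b)$ to contain every $a$ such that $(a,B)$ survives pruning for some ancestor $B$ of $b$. This is an aggregation along root-to-leaf paths, which I cast as a tree contraction on $T$ using Theorem~\ref{thm:tree}, where the internal operator concatenates (bounded-size) candidate lists and the contraction is made well-defined by padding to the worst-case constant width. Since each $N(B)$ is $O(1)$ and each leaf sits on $O(\log n)$ compressed-quadtree levels on average --- and the WSPD ensures the total number of surviving $(a,b)$ assignments is $O(n)$ --- the resulting collection of pairs has size $O(n)$ and is produced in $O(n\log n)$ oblivious time. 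Finally, an oblivious sort of all pairs $(a,b)$ by $a$ followed by a segmented min-scan using blackbox distance comparisons yields, for each $a$, its nearest neighbor; reporting a global closest pair is a single additional min-scan. The main obstacle is the third step: ensuring that the wedge pruning and the subsequent tree contraction both produce arrays of deterministic length despite the data-dependent contents of $N(B)$, so that no branching exposes information about the input --- this is where the padding-with-null-items discipline together with the high-probability bounds from Theorems~\ref{thm:list-rank} and~\ref{thm:tree} does the real work.
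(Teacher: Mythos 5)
Your overall route is the same as the paper's: compressed quadtree (Theorem~\ref{thm:quadtree}), WSPD (Theorem~\ref{thm:wspd}), restriction to singleton--box pairs carried with null padding, wedge pruning of each $N(B)$ to constant size, a tree contraction to push candidates to the leaves, and a final sort-and-scan. The oblivious implementation details you give (blackbox tags, padded arrays, segmented scans) also match the paper's discipline.

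There is, however, one genuine gap in the push-down step. You define $N'(b)$ as \emph{every} $a$ such that $(\{a\},B)$ survives pruning for some ancestor $B$ of the leaf $b$, and you justify the $O(n)$ bound on $\sum_b |N'(b)|$ by saying each leaf sits on $O(\log n)$ compressed-quadtree levels ``on average.'' Neither half of that works: a compressed quadtree has no useful depth bound (a root-to-leaf path can have length $\Theta(n)$), and even a uniform $O(\log n)$ depth would only give $O(n\log n)$ pairs, not $O(n)$. Worse, without further filtering a single pair $(\{a\},B)$ with $B$ high in the tree gets propagated to every leaf under $B$, so the unfiltered union of the $N'(b)$ can be quadratic. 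The missing idea is the Callahan--Kosaraju filter that the paper builds into its definition of $N'(b)$: during the contraction, $a$ is retained at $b$ only if $a$'s distance to $b$ is no larger than the minimum distance from $a$ to the other (constant-many, post-wedge-pruning) points of $N(B)$; a packing argument then shows each $a$ survives at only $O(1)$ leaves, which is what makes the candidate lists constant-width and the total output $O(n)$. Your ``padding to worst-case constant width'' presupposes exactly this bound, so the filter is not optional --- it is the step that makes the constant-width padding legitimate. With that correction the rest of your argument, including the final sort and min-scans, goes through as in the paper.
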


Starting from the data-oblivious algorithms of Theorem~\ref{thm:neighbors},
we can then apply standard cryptographic circuit simulation
methods to derive a secure multiparty computation involving private data 
(e.g., see~\cite{bnp-fssmp-08,clos-uctms-02,da-smpcp-01,dz-passm-02,%
m-smpcm-06,mnps-fstpc-04}).
Hence, we obtain secure two-party protocol for Alice and Bob to
compute either the closest pair or 
the nearest neighbor in the union of their $n$ points for
each of their respective points, but otherwise learn nothing about
the other person's points. 

\begin{corollary} \label{cor:neighbors}
  There is a secure two-party protocol that computes the all nearest
  neighbors and a closest pair in the union of two private sets of
  points of total size $n$ with $O(n\log n)$ communication complexity.
\end{corollary}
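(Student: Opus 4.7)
The plan is to derive the corollary mechanistically from Theorem~\ref{thm:neighbors} by combining the data-oblivious algorithm with a generic cryptographic simulation of its constant-sized black-box operations. First, I would have Alice and Bob encode their respective private point sets (of total size $n$) using a cryptographic sharing or masking scheme such as the ones provided by the SMC frameworks cited in~\cite{bnp-fssmp-08,clos-uctms-02,da-smpcp-01,dz-passm-02,m-smpcm-06,mnps-fstpc-04}. Under such a scheme, each coordinate value is represented as a pair of shares (or a garbled wire label) such that neither party individually can recover the underlying value, but joint evaluation of a small Boolean/arithmetic circuit on the shares yields shares of the circuit's output.

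Next, I would have the two parties run the algorithm of Theorem~\ref{thm:neighbors} in lock-step, executing the control flow and memory accesses \emph{in the clear} --- this is safe precisely because the algorithm is data-oblivious, so the sequence of memory addresses accessed and the branching pattern depend only on $n$, on the problem being solved, and on public random bits. Whenever the algorithm calls a constant-input, constant-output black-box comparator or arithmetic primitive (for oblivious sorting, the convex-hull subroutines in Lemma~\ref{lem:test}, the tree-contraction operations of Theorem~\ref{thm:tree}, the list-ranking linkouts of Theorem~\ref{thm:list-rank}, the WSPD tests, the wedge-pruning tests, and the final distance comparisons), the parties invoke the underlying SMC primitive on the shared inputs and receive shared outputs, without revealing the intermediate plaintext values. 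Each such black-box costs $O(1)$ communication in the underlying primitive, and Theorem~\ref{thm:neighbors} guarantees only $O(n\log n)$ such invocations in total.

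Finally, at the end of the computation, for each point $p$ that belongs to a given party, that party's share of the identity of $p$'s nearest neighbor (and of the closest pair) is revealed by the standard output-reconstruction step of the SMC framework; the other party receives nothing. Summing over all $O(n\log n)$ black-box invocations, the total communication complexity is $O(n\log n)$ times the per-primitive communication cost, which matches the claim of the corollary.

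I expect the only nontrivial step to be convincing oneself that nothing leaks \emph{outside} the black boxes: in particular, that the random choices (for list ranking and tree contraction) can be generated jointly as public coin flips without compromising security, and that the array compaction, conditional-swap, and oblivious-sorting routines used throughout Sections on ANLV, list ranking, tree contraction, WSPD, and the nearest-neighbor pipeline truly touch memory in an input-independent pattern. Since each of those was established in the proofs of Theorems~\ref{thm:convex-hull}, \ref{thm:list-rank}, \ref{thm:tree}, \ref{thm:quadtree}, \ref{thm:wspd}, and \ref{thm:neighbors}, the corollary follows.
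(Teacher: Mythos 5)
Your proposal is correct and follows essentially the same route as the paper, which likewise derives the corollary by applying standard SMC circuit-simulation to the data-oblivious algorithm of Theorem~\ref{thm:neighbors}, running the (input-independent) control flow in the clear and cryptographically simulating only the constant-size black boxes; you simply spell out the details the paper leaves implicit. (One trivial nit: Lemma~\ref{lem:test} belongs to the convex-hull algorithm and is not invoked in the all-nearest-neighbors pipeline.)
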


The result of Corollary~\ref{cor:neighbors} is perhaps counter-intuitive,
in that one might, at
first, believe that such a computation reveals
all of the points in question. 
However, if Alice and Bob's respective sets of points are relatively
well-separated, then each of them would learn almost nothing from a
two-party all nearest neighbors computation, for, in this case, each
of their respective points has a nearest neighbor 
in its same original set.

% ------------------------------------------------------------------------

\section{Conclusion}
We have given efficient oblivious algorithms for a number of
geometric problems, which are natural problems to arise in
privacy-preserving protocols for computing functions of points that
are derived from the coordinates of actors in various location-based
services.
We have also given oblivious algorithms for several fundamental combinatorial
problems.
There are a host of open problems, however, that might be
of interest in privacy preserving computations, including the following:
\begin{itemize}
\item
Given a set $S$ of $n$ vertical and horizontal
line segments, can one obliviously compute in $O(n\log n)$ time
the number of pairs of segments in $S$ that intersect?
\item
Given a set $S$ of $n$ points in the plane, can one construct a
representation of the Delaunay triangulation of $S$ obliviously in
$O(n\log n)$ time?
\item
Given a set $S$ of $n$ points in $\R^3$, can one construct a
representation of the convex hull of $S$ obliviously in
$O(n\log n)$ time?
\item
Given a simple polygon $P$ of size $n$, can one construct a
representation of a triangulation of $P$ obliviously in $O(n\log n)$
time? If so, is this the fastest time possible for an oblivious
algorithm?
\end{itemize}

% ------------------------------------------------------------------------

\subsection*{Acknowledgments}
We would like to thank Wenliang (Kevin) Du for several stimulating
discussions regarding the connections between oblivious algorithms
and privacy-preserving protocols.
This research was supported in part by NSF grants 
0830149, 0830403, 1011840, and 1012060
and by ONR under MURI grant N00014-08-1-1015.

% ------------------------------------------------------------------------

{\raggedright
\bibliographystyle{abbrv}
\bibliography{goodrich,k_anonymity,geom,extra}
}

\end{document}